\tikzstyle{ha}=[ellipse, draw=black, semithick]
\tikzset{my_rect1/.style={fill=black!30, thin}}
\tikzset{my_rect2/.style={fill=black!15, thin}}
\tikzset{my_rect3/.style={fill=black!5, thin}}
\tikzset{cprefig/.style={scale=0.7}}
\begin{document}
\title{Automatic Synthesis of Switching Controllers for 
Linear Hybrid Automata\\[2ex]
{\rm Technical Report}}

\author{Massimo Benerecetti \and Marco Faella \and Stefano Minopoli}
\institute{Universit\`a di Napoli\\``Federico II'', Italy}
\maketitle

\begin{abstract}
  In this paper we study the problem of automatically generating
  switching controllers for the class of Linear Hybrid Automata, with
  respect to safety objectives. 
  We identify and solve inaccuracies contained in previous
  characterizations of the problem, providing a sound and
  complete symbolic fixpoint procedure,
  based on polyhedral abstractions of the state space.
  We also prove the termination of each iteration of the procedure.
  Some promising experimental results are presented, based on an
  implementation of the fixpoint procedure on top of the tool PHAVer.
\end{abstract}

\section{Introduction}

Hybrid systems are an established formalism for modeling
physical systems which interact with a digital controller.
From an abstract point of view, a hybrid system is a dynamic system
whose state variables are partitioned into discrete and continuous ones.
Typically, continuous variables represent physical quantities like
temperature, speed, etc., while discrete ones represent \emph{control modes},
i.e., states of the controller.

Hybrid automata~\cite{hybridLics96} are the most common syntactic variety of hybrid system:
a finite set of locations, similar to the states of a finite automaton,
represents the value of the discrete variables.
The current location, together with the current value of 
the (continuous) variables, form the instantaneous description of the system.
Change of location happens via discrete transitions, and 
the evolution of the variables is governed 
by differential equations attached to each location.
In a Linear Hybrid Automaton (LHA), the allowed differential
equations are in fact differential inclusions of the type $\dot{\mathbf{x}} \in P$,
where $\dot{\mathbf{x}}$ is the vector of the first derivatives of all variables
and $P$ is a convex polyhedron.
Notice that differential inclusions are non-deterministic, allowing
for infinitely many solutions.

The most studied problem for hybrid systems is \emph{reachability}:
computing the set of states that are reachable from the initial states,
in any amount of time.
The reachability problem for LHAs was proved undecidable in~\cite{whatsdecidable95}, indicating that no exact discrete abstraction exists.
The complexity standing of the problem was further refined to 
\emph{semi-decidable} in~\cite{WongToi}, whose results imply that
it is possible to exactly compute the set of states that are reachable 
within a bounded number of discrete transitions 
(\emph{bounded-horizon reachability}).

We study LHAs whose discrete transitions are partitioned into
controllable and uncontrollable ones, and we wish to compute a
strategy for the controller to satisfy a given goal, regardless of the
evolution of the continuous variables and of the uncontrollable
transitions. Hence, the problem can be viewed as a \emph{two player
  game}~\cite{tomlin00}: 
on one side the controller, who can only issue controllable
transitions, on the other side the environment, who can choose the
trajectory of the variables and can take uncontrollable transitions whenever
they are enabled.

As control goal, we consider safety, i.e., the objective of keeping
the system within a given region of safe states.  This problem has
been considered several times in the literature. Here, we fix some
inaccuracies in previous presentations, propose a sound and
complete procedure for the problem\footnote{In other words, an algorithm that may or may not
terminate, and that provides the correct answer whenever it terminates.}
and we present a publicly available implementation of the procedure.
In particular, we present
a novel algorithm for computing the set of states that may reach a
given region while avoiding another one, a problem that is at the
heart of the synthesis procedure.

Contrary to most recent literature on the subject, we focus on exact 
algorithms. Although it is established that exact analysis and synthesis
of realistic hybrid systems is computationally demanding, 
we believe that the ongoing research effort on approximate techniques
should be based on the solid grounds provided by the exact approach.
For instance, a tool implementing an exact algorithm (like our PHAVer+)
may serve as a benchmark to evaluate the performance and the precision
of an approximate tool.


\paragraph*{Related work.}
The idea of automatically synthesizing controllers for dynamic systems
arose in connection with discrete systems~\cite{RW87}.
Then, the same idea was applied to real-time systems modeled by timed
automata~\cite{MalerGames}, thus coming one step closer to the
continuous systems that control theory usually deals with.  Finally,
it was the turn of hybrid systems~\cite{WongToi,hybridgames99}, and in
particular of Linear Hybrid Automata, the very model that we analyze
in this paper.  Wong-Toi proposed the first symbolic semi-procedure to
compute the controllable region of a LHA w.r.t. a safety
goal~\cite{WongToi}.  The heart of the procedure lies in the operator
\emph{flow\_avoid}$(U,V)$, which computes the set of system
configurations from which a continuous trajectory may reach the set
$U$ while avoiding the set $V$ (hence, in this paper we call this
operator $\rwa$, for \emph{Reach While Avoiding}).
Tomlin et al.~\cite{tomlin00} and Balluchi et al.~\cite{Balluchi03}
analyze much more expressive models, with generality in mind rather than
automatic synthesis. Their \emph{Reach} and
\emph{Unavoid\_Pre} operators, respectively, again correspond to
\emph{flow\_avoid}.

As explained in Section~\ref{sec:previous}, the algorithm provided
in~\cite{WongToi} for \emph{flow\_avoid} does not work for non-convex
$V$, a case which is very likely to occur in practice, even if the
original safety goal is convex.
A slightly different algorithm for \emph{flow\_avoid} is reported 
to have been implemented in the tool {\sc HoneyTech}~\cite{honeytech},
and we compare it with ours in Section~\ref{sec:previous}.

Asarin et al.~\cite{asarin00}
investigate the synthesis problem for hybrid systems
where all discrete transitions are controllable and the trajectories
satisfy given linear differential equations of the type
$\dot{\mathbf{x}} = A \mathbf{x}$.  The expressive
power of these constraints is incomparable with the one offered by the
differential inclusions occurring in LHAs.  In particular, linear
differential equations give
rise to deterministic trajectories, while differential inclusions are
non-deterministic.  In control theory terms, differential inclusions
can represent the presence of environmental \emph{disturbances}.  The
tool {\tt d/dt}~\cite{ddt02}, by the same authors, is reported to
support controller synthesis for safety objectives, but the publicly
available version in fact does not.

The rest of the paper is organized as follows.  Section~\ref{sec:defs}
introduces and motivates the model.  In Section~\ref{sec:safety}, we
present the semi-procedure which solves the synthesis problem.
Section~\ref{sec:experiments} reports some experiments performed on
our implementation of the procedure,
while Section~\ref{sec:conclusions} draws some conclusions.

\section{Linear Hybrid Automata} \label{sec:defs}

A \emph{convex polyhedron} is a subset of $\reals^n$ that is the intersection
of a finite number of strict and non-strict affine half-spaces.
A \emph{polyhedron} is a subset of $\reals^n$ that is 
the union of a finite number of convex polyhedra.
For a general (i.e., not necessarily convex) polyhedron $G \subseteq \reals^n$, 
we denote by $\clos{G}$ its topological closure,
and by $\conv{G} \subseteq 2^{\reals^n}$ its representation
as a finite set of convex polyhedra.
%

Given an ordered set $X = \{x_1, \ldots , x_n\}$ of variables, a \emph{valuation} 
is a function $v : X\rightarrow\reals$.
Let $\Val(X)$ denote the set of valuations over $X$.
There is an obvious bijection between $\Val(X)$ and $\reals^n$,
allowing us to extend the notion of (convex) polyhedron to sets of valuations.
We denote by $\cpoly(X)$ (resp., $\poly(X)$) the set of convex polyhedra 
(resp., polyhedra) on $X$.

We use $\dot{X}$ to denote the set $\{\dot{x}_1, \ldots , \dot{x}_n\}$
of dotted variables, used to represent the first derivatives, and $X'$
to denote the set $\{x'_1, \ldots , x'_n\}$ of primed variables, used
to represent the new values of variables after a transition.
Arithmetic operations on valuations are defined in the straightforward way.
An \emph{activity} over $X$ is a differentiable function 
$f:\reals^{\geq 0}\rightarrow \Val(X)$.
Let $Acts(X)$ denote the set of activities over $X$.
The \emph{derivative} $\dot{f}$ of an activity $f$ is defined in the standard way and
it is an activity over $\dot{X}$.
A \emph{Linear Hybrid Automaton} $H = (\Loc, X, \edgc, \edgu, \Flow,
\Inv, \Init)$ consists of the following:
\begin{itemize}
\item A finite set $\Loc$ of \emph{locations}.
\item A finite set $X = \{x_1, \ldots, x_n\}$ of continuous,
  real-valued \emph{variables}.  A \emph{state} is a pair $\pair{l, v}$ of
  a location $l$ and a valuation $v \in \Val(X)$.
\item Two sets $\edgc$ and $\edgu$ of \emph{controllable} and
  \emph{uncontrollable transitions}, respectively. They describe
  instantaneous changes of locations, in the course of which variables
  may change their value.  Each transition $(l, \mu, l')\in \edgc \cup
  \edgu$ consists of a \emph{source location} $l$, a \emph{target
    location} $l'$, and a \emph{jump relation} $\mu \in \poly(X \cup
  X')$, that specifies how the variables may change their value during
  the transition.  The projection of $\mu$ on $X$ describes the
  valuations for which the transition is enabled; this is often
  referred to as a \emph{guard}.
\item A mapping $\Flow : \Loc \to \cpoly(\dot{X})$ attributes to each
  location a set of valuations over the first derivatives of the
  variables, which determines how variables can change over time.
\item A mapping $\Inv : \Loc \to \poly(X)$, called the
  \emph{invariant}.
\item A mapping $\Init: \Loc \to \poly(X)$, contained in the
  invariant, defining the \emph{initial states} of the automaton.
\end{itemize}
We use the abbreviations $S = \Loc \times \Val(X)$ for the set of states
and $\edg = \edgc \cup \edgu$ for the set of all transitions.
Moreover, we let $\Invs = \bigcup_{l\in\Loc} \{ l \} \times \Inv(l)$
and $\Inits = \bigcup_{l\in\Loc} \{ l \} \times \Init(l)$.
Notice that $\Invs$ and $\Inits$ are sets of states.
Given a set of states $A$ and a location $l$, we
denote by $\proj{A}{l}$ the projection of $A$ on $l$, i.e. 
$\{v\in \Val(X) \mid \pair{l,v}\in A\}$.

\subsection{Semantics}

The behavior of a LHA is based on two types
of transitions: \emph{discrete} transitions correspond
to the $\edg$ component, and produce an instantaneous change
in both the location and the variable valuation;
\emph{timed} transitions describe the change of the variables over
time in accordance with the $\Flow$ component.

Given a state $s = \pair{l, v}$, we set $\loc(s) = l$ and $\val(s) = v$.
An activity $f \in Acts(X)$ is called \emph{admissible from $s$}
if \emph{(i)} $f(0) = v$ and
\emph{(ii)} for all $\delta\geq 0$ it holds 
$\dot{f}(\delta) \in \Flow(l)$.
We denote by $\adm(s)$ the set of activities that are admissible from $s$.
Additionally, for $f \in \adm(s)$, the \emph{span} of $f$ in $l$, denoted by $\sp(f,l)$
is the set of all values $\delta\geq 0$
such that $\pair{l, f(\delta')} \in \Invs$ for all $0\leq\delta'\leq\delta$.
Intuitively, $\delta$ is in the span of $f$ iff $f$ never leaves the invariant
in the first $\delta$ time units.
If all non-negative reals belong to $\sp(f,l)$, we write $\infty \in \sp(f,l)$.

\paragraph{Runs.}
Given two states $s, s'$, and a transition $e \in \edg$,
there is a \emph{discrete step} $s \xto{e} s'$ 
with \emph{source} $s$ and \emph{target} $s'$ iff
\emph{(i)} $s, s' \in \Invs$, 
\emph{(ii)} $e = (loc(s), \mu, loc(s'))$, and
\emph{(iii)} $(\val(s), \val(s')[X'/X])\in \mu$,
where $\val(s')[X'/X]$ is the valuation in $\Val(X')$ obtained from $s'$ by renaming each variable in $X$ with the corresponding primed variable in $X'$.
There is a \emph{timed step} $s \xto{\delta,f} s'$ with \emph{duration} 
$\delta\in\reals^{\geq 0}$ and activity $f \in \adm(s)$ iff 
\emph{(i)} $s \in \Invs$, 
\emph{(ii)} $\delta \in \sp(f,\loc(s))$, and
\emph{(iii)} $s' = \pair{\loc(s), f(\delta)}$.
For technical convenience, we admit timed steps of duration zero%
\footnote{Timed steps of duration zero can be disabled by
adding a clock variable $t$ to the automaton and requesting that each
discrete transition happens when $t>0$ and resets $t$ to $0$ when taken.}.
A special timed step is denoted $s \xto{\infty,f}$
and represents the case when the system follows an activity forever.
This is only allowed if $\infty \in \sp(f,\loc(s))$.
Finally, a \emph{joint step} $s \xto{\delta,f,e} s'$ 
represents the timed step $s \xto{\delta,f} \pair{\loc(s),f(\delta)}$
followed by the discrete step $\pair{\loc(s),f(\delta)} \xto{e} s'$.

A \emph{run} is a sequence
\begin{equation} \label{eq:run} 
r = s_0 \xto{\delta_0,f_0} s_0' \xto{e_0}
      s_1 \xto{\delta_1,f_1} s_1' \xto{e_1} s_2 \ldots s_n \ldots
\end{equation}
of alternating timed and discrete transitions, 
such that either the sequence is infinite, or it ends with a timed transition 
of the type $s_n \xto{\infty,f}$.  
If the run $r$ is finite, we define $\length(r)=n$ to be the length of the run, 
otherwise we set $\length(r)=\infty$.  
The above run is \emph{non-Zeno} if for all $\delta\geq 0$ there exists $i\geq 0$ 
such that $\sum_{j=0}^{i} \delta_j > \delta$.  
We denote by $\states(r)$ the set of all states visited by $r$.  
Formally, $\states(r)$ is the smallest set containing
all states $\pair{\loc(s_{i}), f_{i}(\delta)}$,
for all $0 \leq i \leq \length(r)$ and all 
$0 \leq \delta \leq \delta_i$.
%
%
Notice that the states from which discrete transitions start
(states $s_i'$ in~\eqref{eq:run}) appear in $\states(r)$.
Moreover, if $r$ contains a sequence of one or more
zero-time timed transitions, all intervening states appear in $\states(r)$.

\paragraph{Zenoness and well-formedness.}
A well-known problem of real-time and hybrid systems is that definitions 
like the above admit runs that take infinitely many discrete transitions 
in a finite amount of time (i.e., \emph{Zeno} runs), 
even if such behaviors are physically meaningless.  
In this paper, we assume that the hybrid automaton under consideration
generates no such runs. This is easily achieved by using 
an extra variable, representing a clock, to ensure that the delay between 
any two transitions is bounded from below by a constant.
We leave it to future work to combine our results with more
sophisticated approaches to Zenoness known in the
literature~\cite{Balluchi03,dAFHMS03}.

Moreover, we assume that the hybrid automaton under consideration
is \emph{non-blocking}, i.e., whenever the automaton is about
to leave the invariant there must be an uncontrollable transition
enabled. Formally, for all states $s$ in the invariant,
if all activities $f \in \adm(s)$ eventually leave the invariant,
there exists one such activity $f$ and a time $\delta \in \sp(f, \loc(s))$
such that $s' = \pair{\loc(s), f(\delta)}$ 
is in the invariant and there is an uncontrollable transition 
$e \in \edgu$ such that $s' \xto{e} s''$.
If a hybrid automaton is non-Zeno and non-blocking, we say that
it is \emph{well-formed}.
In the following, all hybrid automata are assumed to be well-formed.

\begin{example}
Consider the LHAs in Figure~\ref{fig:wf}.
The fragment in Figure~\ref{fig:wf1} is well-formed, because
the system may choose derivative $\dot{x}=0$
and remain indefinitely in location $l$.
The fragment in Figure~\ref{fig:wf2} is also well-formed, because
the system cannot remain in $l$ forever, but an uncontrollable
transition leading outside is always enabled. 
Finally, the fragment in Figure~\ref{fig:wf3} is not well-formed, because
the system cannot remain in $l$ forever, and no uncontrollable
transition is enabled.
\end{example}

\begin{figure}[h]
\centering
\subfigure[Well-formed.]{
   \begin{tikzpicture}
     \node[ha, inner sep=0pt] (s0) [label={above:$l$}] 
     {$\begin{array}{c}x\in [0,1] \\ \dot{x}\in[-1,1]\end{array}$};
 \end{tikzpicture}
 \label{fig:wf1}
}
\subfigure[Well-formed.]{
   \begin{tikzpicture}[node distance=2.5cm, auto, shorten >=2pt, shorten <=2pt]
     \node[ha, inner sep=0pt] (s0) [label={above:$l$}] 
          {$\begin{array}{c}x\in [0,1] \\ \dot{x}\in[1,2]\end{array}$};
     \node[ha] (s1) [right of=s0] {...};

     \path[-stealth'] (s0) edge [dashed] node [above] {u} (s1);
 \end{tikzpicture}
 \label{fig:wf2}
}
\subfigure[Not well-formed.]{
   \begin{tikzpicture}[node distance=2.5cm, auto, shorten >=2pt, shorten <=2pt]
     \node[ha, inner sep=0pt] (s0) [label={above:$l$}] 
          {$\begin{array}{c}x\in [0,1] \\ \dot{x}\in[1,2]\end{array}$};
     \node[ha] (s1) [right of=s0] {...};
     \path[-stealth'] (s0) edge node [above] {c} (s1);
 \end{tikzpicture}
 \label{fig:wf3}
}
 \caption{Three LHA fragments. Locations contain the invariant (first line)
 and the flow constraint (second line). Solid (resp., dashed) edges represent
 controllable (resp., uncontrollable) transitions. Guards are $\mathtt{true}$.}
 \label{fig:wf}
\end{figure}
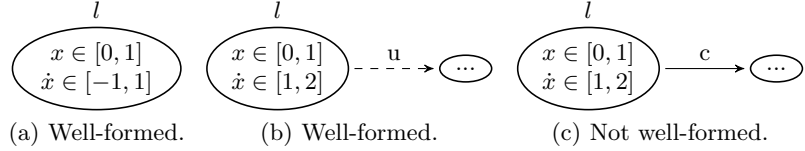

\paragraph{Strategies.}
%
A \emph{strategy} is a function
$\sigma: S \to 2^{\edgc \cup \{\bot\}} \setminus \emptyset$,
where $\bot$ denotes the null action.
Notice that our strategies are \emph{non-deterministic} and \emph{memoryless}
(or \emph{positional}).
A strategy can only choose a transition which is allowed by the automaton.
Formally, for all $s\in S$, if $e \in \sigma(s) \cap \edgc$, 
then there exists $s'\in S$ such that $s\xto{e} s'$.
Moreover, when the strategy chooses the null action,
it should continue to do so for a positive amount of time,
along each activity that remains in the invariant.
If all activities immediately exit the invariant,
the above condition is vacuously satisfied.
%
%
Formally, if $\bot \in \sigma(s)$, 
for all $f \in \adm(s)$ there exists $\delta>0$ such that
for all $0 < \delta' < \delta$ it holds 
$\delta' \not\in \sp(f, \loc(s))$ or  
$\bot \in \sigma(\pair{\loc(s),f(\delta')})$.
This ensures that the null action
is enabled in right-open regions, so that there is an earliest instant
in which a controllable transition becomes mandatory. 

Notice that a strategy can always choose the null action.
The well-formedness condition ensures that the system can always
evolve in some way, be it a timed step or an uncontrollable transition.
In particular, even if we are on the boundary of the invariant
we allow the controller to choose the null action, because,
in our interpretation, 
it is not the responsibility of the controller
to ensure that the invariant is not violated.

We say that a run like~\eqref{eq:run} is \emph{consistent} with a strategy 
$\sigma$ if for all $0\leq i< \length(r)$ the following conditions hold:
\begin{itemize}
\item for all $\delta \geq 0$ such that
$\sum_{j=0}^{i-1} \delta_j \leq \delta < \sum_{j=0}^{i} \delta_j$,
we have $\bot \in \sigma(r(\delta))$;
\item if $e_i \in \edgc$ then $e_i \in \sigma(s_i')$.
\end{itemize}
We denote by $\runs(s, \sigma)$ the set of runs starting from the state $s$
and consistent with the strategy $\sigma$.
The following result ensures that each strategy
has at least one run that is consistent with it,
otherwise the controller may surreptitiously satisfy the safety
objective by blocking the system.
The result can be proved by induction by considering that:
as long as the strategy chooses the null action,
the system may continue along one of the activities that remain
within the invariant; if a state is reached from which 
all activities immediately leave the invariant,
the well-formedness assumption ensures that 
there exists an uncontrollable transition that is enabled;
finally, if the strategy chooses a discrete transition, that 
transition is enabled.

\begin{theorem}
Given a well-formed hybrid automaton,
for all strategies $\sigma$ and initial states $s \in \Inits$,
there exists a run that starts from $s$ and is consistent with $\sigma$.
\end{theorem}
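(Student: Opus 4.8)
The plan is to construct the run incrementally, building it step by step and invoking the well-formedness hypotheses to guarantee that each step can always be extended. Concretely, I would define what it means for a finite prefix to be a \emph{legal $\sigma$-prefix} — a finite alternating sequence of timed and discrete steps that is consistent with $\sigma$ in the sense defined above — and then prove the following single-step extension lemma: every legal $\sigma$-prefix ending in a state $s \in \Invs$ can be extended by one more joint step (or a terminal $\infty$-timed step) to a longer legal $\sigma$-prefix, unless it already ends in a terminal $\infty$-timed step. Chaining these extensions (and handling the infinite case in the limit) yields the desired run.

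For the extension lemma itself I would do a case analysis on $\sigma(s)$. \emph{Case 1: $\bot \in \sigma(s)$.} If some activity $f \in \adm(s)$ has $\infty \in \sp(f,\loc(s))$ and $\bot$ stays in $\sigma$ along $f$ forever, we close the run with $s \xto{\infty,f}$. Otherwise, the ``right-open'' property of the null action (the formal condition imposed on strategies: if $\bot\in\sigma(s)$ then along every activity there is a positive initial interval on which $\bot$ remains available or the activity has already left the invariant) gives us, for a chosen $f$, an earliest time $\delta^\star$ at which $\bot$ becomes unavailable or the invariant is left. At that time either a controllable transition is forced — but $\sigma(s')\neq\emptyset$ always contains some enabled $e\in\edgc$ in that case by the definition of a strategy — or we are about to leave the invariant, and non-blocking supplies an uncontrollable $e\in\edgu$ enabled at a reachable in-invariant state; either way we append a joint step. \emph{Case 2: $\bot \notin \sigma(s)$}, so $\sigma(s)\subseteq\edgc$ and is nonempty; pick any $e\in\sigma(s)$, which by the strategy constraint is enabled at $s$, and append the joint step $s \xto{0,f,e} s'$ with $f$ the constant activity. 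In all cases the resulting longer prefix is easily checked to be consistent with $\sigma$, since we only ever took a null-action delay over an interval on which $\bot\in\sigma$, or a controllable transition that $\sigma$ prescribes, or an uncontrollable transition (which carries no consistency obligation).

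Finally I would assemble the run. Starting from $s_0 = s \in \Inits \subseteq \Invs$, iterate the extension lemma. If at some finite stage it returns a prefix ending in a terminal $\infty$-step, that prefix is the required run. Otherwise we obtain an infinite sequence of ever-longer legal $\sigma$-prefixes whose union is an infinite alternating sequence of timed and discrete steps; by our standing well-formedness (in particular non-Zenoness) this infinite object is a bona fide run, and it is consistent with $\sigma$ because consistency is a condition on each finite index $i$ and is witnessed by the corresponding finite prefix.

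The main obstacle I expect is Case 1: extracting the ``earliest mandatory instant'' from the right-open-region condition and then correctly arguing, at that instant, that \emph{some} step remains available. The subtlety is that at the critical time $\delta^\star$ the state $s' = \pair{\loc(s), f(\delta^\star)}$ may itself be on the boundary of the invariant, so one must combine the strategy's non-emptiness (which could still be just $\{\bot\}$) with the non-blocking hypothesis — possibly after a further infinitesimal delay to reach a state where an uncontrollable transition is genuinely enabled — to produce a legitimate next step; care is needed to ensure this does not loop without progress, which is exactly where the ``earliest instant'' minimality and, ultimately, non-Zenoness are used.
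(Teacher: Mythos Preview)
Your proposal is correct and follows essentially the same inductive construction the paper sketches immediately before the theorem: extend a consistent prefix step by step, letting time pass while $\bot$ is chosen, invoking non-blocking when all activities leave the invariant, and taking an enabled controllable transition when the strategy prescribes one. The paper gives only that one-sentence outline and no further detail; your version is a faithful expansion of it, and you have correctly identified the one genuinely delicate point (extracting the earliest mandatory instant from the right-open condition and arguing that some step is available there), which the paper does not address explicitly.
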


\paragraph{Safety control problem.}
Given a hybrid automaton $H$ and a set of states $T\subseteq\Invs$,
the \emph{safety control problem} asks whether
there exists a strategy $\sigma$ such that, for all initial states $s \in \Inits$,
all runs $r \in \runs(s, \sigma)$
it holds $\states(r) \subseteq T$. \\
We call the above $\sigma$ a \emph{winning strategy}.


\section{Safety Control} \label{sec:safety}

In this section, we consider a fixed hybrid automaton and we present
a sound and complete procedure to solve the safety control problem.

\subsection{The Abstract Algorithm}

We start by defining some preliminary operators.  For a set of states
$A$ and $x \in \{ u, c\}$, let $\pretag_x^{\rm m}(A)$ (for \emph{may
  predecessors}) be the set of states where some discrete transition
belonging to $\edg_x$ is enabled, which leads to $A$, and let $\overline{A}$ be the set complement of $A$.  Analogously,
let $\pretag_x^{\rm M}(A) = \pretag_x^{\rm m}(A) \setminus
\pretag_x^{\rm m}(\overline{A})$ (the \emph{must predecessors}) be the
set of states where all enabled discrete transitions belonging to
$\edg_x$ lead to $A$, and there is at least one such transition
enabled.

\begin{theorem} \label{thm:main-loop}
The answer to the safety control problem for safe set $T\subseteq\Invs$
is positive if and only if
$$\Inits \subseteq \nu W \qdot T \cap \cpre(W),$$
where $\cpre$ is the controllable predecessor operator below.
\end{theorem}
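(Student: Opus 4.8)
I would set $F(W) = T \cap \cpre(W)$ and first observe that $\cpre$ is monotone, so that $F$ is a monotone operator on the complete lattice $2^{S}$ of sets of states; hence $W^{*} := \nu W \qdot T \cap \cpre(W)$ is well defined by Knaster--Tarski, it satisfies $W^{*} = T \cap \cpre(W^{*})$, and it coincides with the union of all post-fixpoints of $F$. The two directions of the ``iff'' are then proved by, respectively, reading a strategy off the fixpoint and exhibiting a suitable post-fixpoint.

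For \emph{soundness} (if $\Inits \subseteq W^{*}$ then a winning strategy exists) I would use $W^{*} \subseteq \cpre(W^{*})$: for every $s \in W^{*}$ the intended meaning of $\cpre$ provides a local controller decision at $s$ --- either a controllable transition into $W^{*}$, or a commitment to the null action along trajectories that stay in $W^{*}$ and meet only discrete transitions leading back into $W^{*}$ --- and I would assemble these pointwise witnesses into a strategy $\sigma$ (setting $\sigma \equiv \{\bot\}$ outside $W^{*}$), taking care that the assembly respects the right-open-$\bot$ requirement in the definition of a strategy. The core claim is that every $r \in \runs(s_{0},\sigma)$ with $s_{0} \in W^{*}$ has $\states(r) \subseteq W^{*}$, which I would prove by induction on the phases of $r$ with invariant ``$s_{i} \in W^{*}$'': each phase $s_{i} \xto{\delta_{i},f_{i}} s_{i}' \xto{e_{i}} s_{i+1}$, and likewise a terminal phase $s_{i} \xto{\infty,f}$, is driven by $\sigma$, so the defining property of $\cpre$ forces all of its states --- the intermediate valuations $\pair{\loc(s_{i}),f_{i}(\delta)}$, the pre-jump state $s_{i}'$, and $s_{i+1}$ --- into $W^{*}$. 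Since $\states(r)$ is the union of the phase state sets, $\states(r) \subseteq W^{*} \subseteq T$, and $\Inits \subseteq W^{*}$ makes $\sigma$ winning.

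For \emph{completeness} (if a winning strategy $\sigma$ exists then $\Inits \subseteq W^{*}$) I would take $R = \bigcup_{s \in \Inits} \bigcup_{r \in \runs(s,\sigma)} \states(r)$, the set of states visited by $\sigma$-consistent runs from the initial states, and show $R \subseteq T \cap \cpre(R)$; then $R$ is a post-fixpoint of $F$, so $R \subseteq W^{*}$, and $\Inits \subseteq R$ gives the claim. Here $R \subseteq T$ is immediate from $\sigma$ being winning. For $R \subseteq \cpre(R)$, fix $s \in R$: since strategies are memoryless, the suffix from $s$ of any $\sigma$-consistent run through $s$ is again $\sigma$-consistent from $s$, so $\sigma(s)$ is a legal controller decision at $s$, and any configuration the environment can produce in reaction to it is visited by some $\sigma$-consistent run from $\Inits$ (prepend the prefix up to $s$, append a continuation, which exists by well-formedness together with the existence theorem stated above), hence lies in $R$; by the defining property of $\cpre$ this witnesses $s \in \cpre(R)$.

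The lattice-theoretic bookkeeping and the phase induction are routine; the delicate part, and the reason the statement is non-trivial, is the interface with $\cpre$. I expect the main work to be in pinning down exactly the property ``$s \in \cpre(W)$ iff the controller has a \emph{local} decision that keeps the \emph{continuous} evolution safe'', where ``local'' must correctly account for timed steps of arbitrary or infinite duration, for the non-blocking backstop that forces an uncontrollable transition whenever the invariant would otherwise be left, and for the right-open-$\bot$ condition on strategies. Getting the definition of $\cpre$ precise enough that both the gluing of local witnesses into a global strategy (soundness) and the extraction of local witnesses from a global winning strategy (completeness) go through is the heart of the matter --- and, presumably, where the inaccuracies of the earlier characterizations mentioned in the introduction are located.
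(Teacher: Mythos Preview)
Your description of $\cpre$ as supplying, at each $s\in W^*$, a \emph{local} decision --- ``either a controllable transition into $W^*$, or a commitment to $\bot$ along trajectories that stay safe'' --- does not match the operator. The definition quantifies over all activities $f$ and delays $\delta$, and the escape time $\delta'$ in the second disjunct depends on $f$; nothing says an escape is available at $s$ itself, nor that a single state-based choice suffices. The paper confronts exactly this: it first defines the obvious strategy $\sigma$ (offer $\bot$ everywhere, plus any controllable edge into $W^*$), observes that $\sigma$ need not be winning because it may postpone the controllable action beyond its window of usefulness, and then refines $\sigma$ to a strategy $\sigma'\subseteq\sigma$ by removing $\bot$ at selected states so as to force a timely commitment (their conditions (a) and (b) on the sets $D_{f,s}$). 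Your ``right-open-$\bot$'' concern is real but secondary; this commitment mechanism is the missing idea in your sketch.

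\textbf{Completeness.} Here your route genuinely differs from the paper's. They argue the contrapositive by transfinite induction on the approximants $W_\alpha$: for $s\notin W^*$ one finds the least $\lambda$ with $s\notin W_\lambda$, shows $\lambda$ must be a successor, and then unfolds $s\notin\cpre(W_{\lambda-1})$ to build, against any given strategy, a consistent run leaving $W_{\lambda-1}$, with a delicate case split on whether a certain infimum $\delta^*$ is attained as a minimum. Your post-fixpoint argument (take $R$ to be the states visited by $\sigma$-consistent runs from $\Inits$ and show $R\subseteq T\cap\cpre(R)$) is a valid alternative that avoids ordinals and the infimum analysis. The price is a careful run-splicing argument: to verify $s\in\cpre(R)$ you must consider an \emph{arbitrary} activity $f$ from $s$, prefix by the run that reached $s$, and suffix by a continuation (via memorylessness and the existence theorem) to place every intermediate state and every uncontrollable successor back in $R$. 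For LHAs this is routine, but the concatenation of two timed steps under the differentiability requirement on activities deserves a word.

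A minor point: the ``inaccuracies of earlier characterizations'' the introduction alludes to lie in the computation of the $\rwamay$ (flow-avoid) operator, not in the abstract definition of $\cpre$.
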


\paragraph{Controllable predecessor operator.}
For a set of states $A$, the operator $\cpre(A)$ returns the set of
states from which the controller can ensure that the system remains in
$A$ during the next joint transition.
This happens if for all
activities chosen by the environment and all delays $\delta$, one of
two situations occurs:
\begin{itemize}
\item either the systems stays in $A$ up to time $\delta$, while all
  uncontrollable transitions enabled up to time $\delta$ (included)
  also lead to $A$, or
\item there exists a time $\delta' < \delta$, such that the
  system stays in $A$ up to time $\delta'$, all uncontrollable
  transitions enabled up to time $\delta'$ (included) also lead to
  $A$, and the controller can issue a transition at time $\delta'$
  leading to $A$.
\end{itemize}
To improve readability, for a set of states $A$, an activity $f$, and
a time delay $\delta\geq 0$ (including infinity), we denote by
$\while(A,f,\delta)$ the set of states from where following the
activity $f$ for $\delta$ time units keeps the system in $A$ all the
time, and any uncontrollable transition taken meanwhile also leads
into $A$.  Formally,
\begin{multline*}
\while(A, f, \delta) = \Bigl\{ s \in S \,\Bigl|\,
\forall 0\leq \delta'\leq\delta:
\pair{\loc(s), f(\delta')} \in A \setminus \pre{u}{m}(\overline{A}) \Bigr\}.
\end{multline*}
We can now formally define the $\cpre$ operator and prove Theorem~\ref{thm:main-loop}.
\begin{align*}
\cpre(A) &= 
\Bigl\{ s \in S \,\Bigl|\, \forall f \in \adm(s), \delta \in \sp(f, \loc(s)) :
s\in\while(A,f,\delta)\\ 
&\text{or } \exists 0\leq\delta' < \delta :
s \in \while(A,f,\delta') \text{ and }
\pair{\loc(s), f(\delta')} \in \pre{c}{m}(A) \Bigr\}.
\end{align*}

\begin{proof}{}
[{\bf \emph{if}}]
We shall first build a winning strategy in two steps. Let $W^* = \nu W
\qdot T \cap \cpre(W)$ and let $\sigma$ be a strategy defined as follows,
for all states $s$:
\begin{itemize}
\item $\bot \in \sigma(s)$ and
\item if $s \xto{e} s'$, $s,s'\in W^*$ and $e\in\edgc$, then $e \in
  \sigma(s)$.
\end{itemize}
%
While $\sigma$ is clearly a strategy, it is not necessarily a winning
strategy, as it may admit runs which delay controllable
actions either beyond the safety set $W^*$ or beyond their
availability. We can, however, recover a winning strategy by
restricting $\sigma$ in appropriate ways. For all states $s\in S$ and
activities $f \in\adm(s)$, let
\begin{align*}
D_{f,s} = \big\{ &\delta > 0 \mid \\
&\forall 0 \leq \delta' \leq \delta :
\pair{\loc(s),f(\delta')} \in W^* \text{ and }
\sigma(\pair{\loc(s),f(\delta')}) \cap \edgc \neq \emptyset \big\}.
\end{align*}
denote the set of positive time units for which the system can follow
activity $f$, starting from $s$, always remaining in $W^*$ with some
controllable transition enabled and available to the controller.

Starting from $\sigma$, we can define a new strategy $\sigma'$ which
coincides with $\sigma$ on all the states, except for the states $s\in
W^*$ with $\edgc\cap\sigma(s) \neq \emptyset$, where it satisfies
$\sigma'(s)\subseteq\sigma(s)$ and the following two conditions:
\begin{itemize}
\item[a)] If there is $f \in\adm(s)$ such that $D_{f,s} = \emptyset$, then
  $\bot\not\in\sigma'(s)$;
\item[b)]
  \begin{minipage}[t]{.95\textwidth}for all $f \in\adm(s)$, if
    $D_{f,s} \not= \emptyset$,
%
    then there exists a $\delta\in D_{f,s}$ with $\bot\not\in
    \sigma'(\pair{\loc(s),f(\delta)})$ and $\forall 0\leq \delta''<
    \delta$, $\bot\in
    \sigma'(\pair{\loc(s),f(\delta'')})$.\end{minipage}
\end{itemize}
Intuitively, the new strategy $\sigma'$ ensures that following any
activity from a state $s\in W^*$ in which some controllable action is
enabled, a controllable action will always be taken before none of
them is available and before leaving $W^*$.

To prove that $\sigma'$ is winning, we must show that for every $s\in
\Inits$ and every $r\in\runs(\sigma',s)$, $States(r)\subseteq T$. Let 
$$r = s_0 \xto{\delta_0,f_0} s_0' \xto{e_0}
      s_1 \xto{\delta_1,f_1} s_1' \xto{e_1} s_2 \ldots s_n \ldots
$$
be a run consistent with $\sigma'$.
The following properties can be proved:
\begin{enumerate}
\item \label{eq:timedtrans}
if $s_i \xto{\delta_i,f_i} s_i'$ occurs in $r$, with
$\delta_i>0$ and $s_i\in W^*$, then
for all $0\leq\delta'\leq\delta_i$,
it holds $\pair{\loc(s_i), f_i(\delta')}\in W^*$;
\item \label{eq:inftrans}
if $s_i \xto{\infty,f_i}$ occurs in $r$ and $s_i\in W^*$,
then for all $\delta'\geq 0$, it holds $\pair{\loc(s_i),f_i(\delta')}\in W^*$;
\item if $s_i \xto{e} s_i'$ occurs in $r$ and $s_i\in W^*$,
then $s_i'\in W^*$. \label{eq:immtrans}
\end{enumerate}

%
We shall prove property (\ref{eq:timedtrans}), as (\ref{eq:inftrans}) can
be proved similarly. Since $\delta_i > 0$, by the consistency of $r$
with $\sigma'$, we have $\bot\in\sigma'(s_i)$. 
Assume, by contradiction, that $\pair{\loc(s_i),f_i(\delta')}\not\in
W^*$ for some $0<\delta'<\delta_i$. Since $s_i\in
W^*=\cpre(W^*)$, then $s_i\in \while(W^*,f_i,\delta)$ for some
$\delta\in\reals^{\geq 0} \cup \{\infty\}$, and either $\delta =
\infty$ or $s_i \xrightarrow{\delta, f_i} s$ and $s\in\pre{c}{m}(W^*)$.

If $\delta\geq\delta'$, we have an immediate contradiction, since
it would imply $s_i\in\while(W^*,f_i,\delta')$ and, therefore,
$\pair{\loc(s_i),f_i(\delta')}\in W^*$. 

Assume, then, $\delta < \delta'$.  
Then $\pair{\loc(s_i), f_i(\delta)} \in \pre{c}{m}(W^*)$, i.e.,
  $\pair{\loc(s_i),f_i(\delta)} \xto{e} s'$ for some $e \in \edgc$
  and $s'\in W^*$.  Therefore, both $e\in \sigma'(\pair{\loc(s_i),
  f_i(\delta)})$ and, by the consistency of $r$ with $\sigma'$,
  $\bot\in\sigma'(\pair{\loc(s_i), f_i(\delta)})$.
Since $\bot\in\sigma'(\pair{\loc(s_i), f_i(\delta)})$, by definition
of $\sigma'$ the premise of property~{\it a}) cannot hold. Therefore,
by property {\it b}), there must be a $\delta \leq \delta^* <
\delta'$ with $\bot\not \in
\sigma'(\pair{\loc(s_i),f_i(\delta^*)})$. On the other hand,
the consistency of $r$ requires that $\bot \in
\sigma'(\pair{\loc(s_i),f_i(\hat{\delta})})$ for all $0\leq
\hat{\delta}<\delta_i$, which is a contradiction. Therefore, for all
$0\leq\delta'<\delta_i$, $\pair{\loc(s_i), f_i(\delta')}\in W^*$.

Finally, to prove that $s_i'\in W^*$ we can proceed again by
contradiction, assuming $s_i'\not\in W^*$. Let $0 < \delta'<\delta_i$,
then $\pair{\loc(s_i), f_i(\delta')}\in W^*$. Therefore,
$\pair{\loc(s_i), f_i(\delta')}\in \cpre(W^*)$ and there exists
$\delta'\leq\delta^*<\delta_i$ with $\pair{\loc(s_i),
f_i(\delta')}\in \while(W^*,f_i,\delta^*)$ and $\pair{\loc(s_i),
f_i(\delta^*)}\in \pre{c}{m}(W^*)$. Hence, there is a controllable
transition $e\in\edgc$ enabled in $\pair{\loc(s_i), f_i(\delta^*)}$
and leading to $W^*$. As a consequence,
$\{e,\bot\}\subseteq\sigma(\pair{\loc(s_i), f_i(\delta^*)})$ and, by
condition {\it b}), $\bot\not\in \sigma'(\pair{\loc(s_i),
f_i(\overline{\delta})})$, for some $\delta^* < \overline{\delta}
< \delta_i$, which contradicts consistency of $r$ with $\sigma'$,
hence $s_i'\in W^*$.

Let us consider property (\ref{eq:immtrans}). We have two cases. If
$e\in\edgc$, then the consistency of $r$ ensures that $e\in\sigma'(s_i)$
which, by definition of $\sigma'$, requires that $s_{i+1}\in
W^*$. Assume then that $e\in\edgu$. Then $\bot\in\sigma'(s_i)$. 
%
%
Since $s_i\in W^* = \cpre(W^*)$, it must hold $s_i \in
\while(W^*,f,0)$, for every $f\in\adm(s_i)$. This, in turn, ensures
that $s_i\in W^*\setminus \pre{u}{m}(\overline{W^*})$, therefore, all
the uncontrollable transitions enabled in $s_i$ lead to $W^*$. Hence
the thesis.


To complete the proof, 
notice that $W^*\subseteq T$ and $s_0\in Inits \subseteq W^*$. An easy
induction on the length of $r$, using properties (\ref{eq:timedtrans}),
(\ref{eq:inftrans}) and (\ref{eq:immtrans}), gives the result.


[{\bf \emph{only if}}]
Let $s \not\in W^*$, we prove that for all
strategies there is a run that starts in $s$, is consistent with the
strategy and leaves $T$. Let 

\begin{itemize}
\item $W_0 = T$, 
\item $W_{\alpha} = T \cap\cpre(W_{\alpha-1})$, for a successor ordinal $\alpha$, and 
\item $W_{\alpha}=\bigcap_{\alpha'<\alpha}W_{\alpha'}$ for a limit ordinal $\alpha$.
\end{itemize}

We proceed by induction on the smallest ordinal $\lambda$ such that $s\not\in W_\lambda$.
If $\lambda=0$, it holds $s \not\in T$ and the thesis is immediate.


We will show that if $\lambda>0$ then $\lambda$ cannot be a limit ordinal. Assume by contradiction that $\lambda$ is a limit ordinal. 
Since $\lambda$ is the smallest ordinal such that $s\not\in W_\lambda$, we have $s\in W_\alpha$, for all $\alpha<\lambda$: this means that $s\in\bigcap_{\alpha<\lambda} W_\alpha$. But, since $\lambda$ is a limit ordinal, $W_\lambda=\bigcap_{\alpha<\lambda}W_\alpha$ and we have that $s\in W_\lambda$, obtaining a contradiction.

Otherwise, if $\lambda>0$ is a successor ordinal, we have $s \in W_{\lambda-1} \setminus W_\lambda$ and $s \not\in
\cpre(W_{\lambda-1})$.  According to the definition of $\cpre$, there exists
an activity $f \in \adm(s)$ and $\delta\in span(s,f)$ such that $s
\not\in \while(W_{\lambda-1}, f, \delta)$ and for all $0\leq \delta' <
\delta$ either $s \not\in \while(W_{\lambda-1}, f, \delta')$ or
$\pair{\loc(s), f(\delta')} \not\in \pre{c}{m}(W_{\lambda-1})$.

Let $\delta^*$ be the infimum of those $\delta'$ such that
$s\not\in\while(W_{\lambda-1}, f, \delta')$, i.e.,
\begin{equation} \label{eq:deltastar}
\delta^* = \inf \{ \delta \mid s \not\in \while(W_{\lambda-1}, f, \delta) \}.
\end{equation}
Clearly $0\leq \delta^*\leq \delta$ and, for all $0\leq\overline{\delta} <
\delta^*$, $\pair{\loc(s), f(\overline{\delta})} \not\in
\pre{c}{m}(W_{\lambda-1})$.
Hence, any controllable transition enabled in $\pair{\loc(s),
  f(\overline{\delta})}$, for any such $\overline{\delta}$, leads
outside $W_{\lambda-1}$.
Therefore, any strategy choosing a controllable transition in some 
of the states $\pair{\loc(s), f(\overline{\delta})}$ 
has a consistent run leading outside $W_{\lambda-1}$.
By inductive hypothesis, we obtain the thesis.

If, on the other hand, the strategy allows the controller to stay
inactive in all those states, there is a consistent run that reaches
$\delta^*$. Then we have two cases.  If $\delta^*$ is in fact the
minimum of the above set, according to the definition of $\while$,
there exists $\delta_1 < \delta^*$ such that $\pair{\loc(s),
 f(\delta_1)} \in \overline{W_{\lambda-1}} \cup \pre{u}{m}(\overline{W_{\lambda-1}})$.
Therefore, since the controller may
not act before $\delta^*$ along this strategy, there is a consistent
run that reaches $\pair{\loc(s), f(\delta_1)}$, which either is in
$\overline{W_{\lambda-1}}$ or reaches it after an uncontrollable
transition. In both cases, the thesis follows from the inductive hypothesis.

Finally, we have the case in which $\delta^*$ is the infimum but not
the minimum of the above set. In this case $0\leq \delta^* < \delta$
and $\pair{\loc(s), f(\overline{\delta})} \not\in
\pre{c}{m}(W_{\lambda-1})$, for all $0 \leq \overline{\delta} \leq
\delta^*$.  Consider the choice of $\sigma$ in state $\pair{\loc(s),
  f(\delta^*)}$.  If $\bot \not\in \sigma(\pair{\loc(s),
  f(\delta^*)})$, the controller issues a discrete move which leads
into $\overline{W_{\lambda-1}}$.  If, instead, $\bot \in
\sigma(\pair{\loc(s), f(\delta^*)})$, since $\delta^* < \delta \in
span(s,f)$, by the definition of strategy $\sigma$ will keep choosing
$\bot$ for a non-zero amount of time $\gamma$.
By~\eqref{eq:deltastar}, there exists $\delta^* < \hat\delta <
\delta^* + \gamma$ such that $s \not\in \while(W_{\lambda-1}, f,
\hat\delta)$.  As a consequence, there is a consistent run that
reaches a state which either is in $\overline{W_{\lambda-1}}$ or reaches it
after an uncontrollable transition.  Once again, the thesis is
obtained by inductive hypothesis.  \qed
\end{proof}

\subsection{Computing the Predecessor Operator on LHAs}\label{sec:compute}

In this section, we show how to compute the value of the predecessor
operator on a given set of states $A$, assuming that the hybrid automaton
is a LHA and that we can compute
the following operations on arbitrary polyhedra $G$ and $G'$:
the Boolean operations $G\cup G$, $G \cap G$, and $\overline{G}$;
the topological closure $\clos{G}$ of $G$; finally,
for a given location $l\in\Loc$, the \emph{pre-flow} of $G$ in $l$:
$$\epref{G} = \{ u \in \Val(X) \mid \exists \delta\geq 0, c \in \Flow(l) : 
u + \delta \cdot c  \in G \}.$$
Notice that, for two convex polyhedra $P$ and $P'$, if $P\subseteq P'$
then $\epref{P}\subseteq\epref{P'}$ (monotonicity), and
$\epref{(\epref{P})}\,=\epref{P}$ (idempotence).

In the following, we proceed from the basic components of $\cpre$
to the full operator. 
Given a set of states $A$ and a location $l$, we
denote by $\proj{A}{l}$ the projection of $A$ on $l$, i.e. 
$\{v\in \Val(X) \mid \pair{l,v}\in A\}$.
For all $A \subseteq \Invs$ and $x \in\{c, u\}$, it holds:
\begin{align*}
\pre{x}{m}(A) = \Invs \cap \bigcup_{(l,\mu,l') \in \edg_x} \mu^{-1}(\proj{A}{l'}),
\end{align*}
where $\mu^{-1}(Z)$ is the pre-image of $Z$ w.r.t. $\mu$.
%
We also introduce the
auxiliary operator $\rwamay$ (\emph{may reach while avoiding}). Given
a location $l$ and two sets of variable valuations $U$ and $V$,
$\rwamay_l(U,V)$ contains the set of valuations from which the
continuous evolution of the system \emph{may} reach $U$ while avoiding
$V \cap \overline{U}$~\footnote{In $\atl$ notation, we have $\rwamay_l(U,V) \equiv
  \team{env} (\overline{V} \cup U) \U U$, where $env$ is the player
  representing the environment.}.
Notice that on a dense time domain this is not equivalent to reaching $U$ 
while avoiding $V$: If an activity avoids $V$ in a right-closed interval,
and then enters $U \cap V$, the first property holds, while the latter does not.
%
Formally, we have:
\begin{multline*}
\rwamay_l(U, V) = \Bigl\{ u \in \Val(X) \,\Bigl|\,
\exists f \in \adm(\pair{l,u}),\, \delta\geq 0: \\
f(\delta)\in U \text{ and }
\forall\, 0\leq \delta'<\delta: f(\delta')\in \overline{V}\cup U \Bigr\}.
\end{multline*}
An algorithm for effectively computing $\rwamay$ is presented in the
next section, while the following lemma
states the relationship between $\cpre$ and $\rwamay$.
Intuitively, consider the set $B_l$ of valuations $u$ such that from state
$\pair{l, u}$ the environment can take a discrete transition leading
outside $A$, and the set $C_l$ of valuations $u$ such that from
$\pair{l, u}$ the controller can take a discrete transition into $A$.
We use the $\rwamay$ operator to compute the set of valuations
from which there exists an activity that either leaves $A$
or enters $B_l$, while staying in the invariant and avoiding $C_l$.
These valuations do not belong to $\cpre(A)$, as the environment
can violate the safety goal within (at most) one discrete transition. \\
We say that a set of states $A \subseteq S$ is \emph{polyhedral}
if for all $l \in \Loc$, the projection $\proj{A}{l}$ is a polyhedron.
\begin{lemma} \label{lem:cpreS}
For all polyhedral sets of states $A \subseteq \Invs$, we have
\begin{equation} \label{eq:cpreS}
\cpre(A) = \bigcup_{l \in \Loc} \{ l \} \times \Big( \proj{A}{l} \setminus
\rwamay_l\big( \proj{\Invs}{l} \cap \big( \overline{\proj{A}{l}} \cup B_l \big),
               C_l \cup \overline{\proj{\Invs}{l}} \big) \Big),
\end{equation}
where
$B_l = \proj{\pre{u}{m}\big(\overline{A}\big)}{l}$ and
$C_l = \proj{\pre{c}{m}(A)}{l}$.
\end{lemma}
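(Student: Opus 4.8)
The plan is to prove the set equality in~\eqref{eq:cpreS} by establishing the two inclusions separately, working location by location. Since both sides are unions over $l \in \Loc$ of sets of the form $\{l\} \times (\cdot)$, it suffices to fix a location $l$ and a valuation $u$, set $s = \pair{l,u}$, and show that $s \in \cpre(A)$ if and only if $u$ belongs to the right-hand side, i.e., $u \in \proj{A}{l}$ and $u \notin \rwamay_l(R, S)$, where I abbreviate $R = \proj{\Invs}{l} \cap (\overline{\proj{A}{l}} \cup B_l)$ and $S = C_l \cup \overline{\proj{\Invs}{l}}$. The key observation to record first is that an activity $f \in \adm(s)$ stays in location $l$ forever (discrete transitions are a separate matter handled by $\cpre$), so $\pair{l, f(\delta')} \in A$ reduces to $f(\delta') \in \proj{A}{l}$, $\pair{l,f(\delta')} \in \pre{u}{m}(\overline{A})$ reduces to $f(\delta') \in B_l$, and $\pair{l, f(\delta')} \in \pre{c}{m}(A)$ reduces to $f(\delta') \in C_l$. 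Likewise $\delta \in \sp(f,l)$ is exactly $f(\delta') \in \proj{\Invs}{l}$ for all $0 \le \delta' \le \delta$.

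First I would handle the trivial factor: if $s \notin A$ then the left-hand side clearly excludes $s$ (taking $\delta = 0$ in the definition of $\cpre$ forces $s \in A$), and the right-hand side excludes $u$ since $u \notin \proj{A}{l}$; so I may assume $u \in \proj{A}{l}$ throughout. The core of the argument is then the equivalence
$$
s \notin \cpre(A) \iff u \in \rwamay_l(R, S).
$$
For the forward direction ($\Rightarrow$), unfolding the negation of the $\cpre$ definition yields an activity $f \in \adm(s)$ and a delay $\delta \in \sp(f,l)$ such that $s \notin \while(A,f,\delta)$ and for every $0 \le \delta' < \delta$ either $s \notin \while(A,f,\delta')$ or $f(\delta') \notin C_l$. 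I would let $\delta^*$ be the infimum of the set of delays witnessing $s \notin \while(A,f,\delta')$ — exactly as in the ``only if'' part of Theorem~\ref{thm:main-loop} — and argue that $f$ reaches $R$ at some time $\le \delta^* \le \delta$ while staying out of $S$ on the open interval before it. Here I must carefully distinguish whether the infimum is attained: if it is, then at some $\delta_1 \le \delta^*$ we have $f(\delta_1) \notin A \setminus \pre{u}{m}(\overline A)$, i.e., $f(\delta_1) \in \overline{\proj{A}{l}} \cup B_l$; combined with $f(\delta_1) \in \proj{\Invs}{l}$ (since $\delta_1 \le \delta \in \sp(f,l)$) this gives $f(\delta_1) \in R$. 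If the infimum is not attained, a limiting/density argument produces a suitable reaching time. In all cases, the hypothesis on $\delta' < \delta$ (together with $\delta^* \le \delta$ and $u \in \proj{A}{l}$) ensures $f(\delta') \notin C_l$ and $f(\delta') \in \proj{\Invs}{l}$, hence $f(\delta') \in \overline{S}$, on the relevant open prefix — and if $f(\delta')$ happens to be in $R$ already, it is in the ``$\cup U$'' part of the $\rwamay$ avoidance condition. This shows $u \in \rwamay_l(R,S)$.

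For the reverse direction ($\Leftarrow$), I would take the witnessing activity $f \in \adm(\pair{l,u})$ and delay $\delta$ from the definition of $\rwamay_l(R,S)$, so $f(\delta) \in R$ and $f(\delta') \in \overline{S} \cup R$ for all $0 \le \delta' < \delta$. Since $f(\delta') \in \overline S \subseteq \proj{\Invs}{l}$ for $\delta'$ on the prefix where $f$ has not yet entered $R$, and $f(\delta) \in R \subseteq \proj{\Invs}{l}$, I get $\delta \in \sp(f,l)$ (using the non-blocking well-formedness if needed to argue the invariant is not left earlier). Now $f(\delta) \in R$ means $f(\delta) \in \overline{\proj{A}{l}} \cup B_l$, i.e., either the system has left $A$ or it sits in $\pre{u}{m}(\overline A)$; either way $s \notin \while(A,f,\delta)$. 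And for $0 \le \delta' < \delta$, $f(\delta') \in \overline S$ gives $f(\delta') \notin C_l$, i.e., $\pair{l,f(\delta')} \notin \pre{c}{m}(A)$, while if $f(\delta') \in R$ instead then again $s \notin \while(A,f,\delta')$; so the disjunct ``$\exists \delta' < \delta: s \in \while(A,f,\delta') \wedge f(\delta') \in C_l$'' in the $\cpre$ definition fails. Hence this pair $(f,\delta)$ witnesses $s \notin \cpre(A)$. The main obstacle I anticipate is the same subtlety already flagged in the paper's footnote and the proof of Theorem~\ref{thm:main-loop}: on a dense time domain, ``reaching $U$ while avoiding $V$'' via $\rwamay$ allows touching $V \cap U$ at the endpoint, so I must be meticulous about right-closed versus right-open intervals and about whether infima are attained — getting the boundary bookkeeping exactly aligned between the $\while$/$\cpre$ definitions and the $\rwamay$ definition is where the care is needed, while everything else is routine unfolding of definitions plus the observation that activities do not change location.
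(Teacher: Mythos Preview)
Your approach is the contrapositive of the paper's: your $\Leftarrow$ (from $u\in\rwamay_l(R,S)$ to $s\notin\cpre(A)$) is exactly the contradiction the paper derives in its $[\subseteq]$ argument, and your $\Rightarrow$ is the contrapositive of the paper's $[\supseteq]$. The $\Leftarrow$ direction is clean as you sketch it (the parenthetical about well-formedness is unnecessary: $\overline{S}\cup R\subseteq\proj{\Invs}{l}$ by construction, so $\delta\in\sp(f,l)$ follows directly).

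The genuine gap is in your $\Rightarrow$ direction, in the ``infimum not attained'' sub-case, which you wave off as ``a limiting/density argument.'' This is the one place where the polyhedral hypothesis on $A$ is essential, and you never invoke it. Concretely, write $T=\{\delta'\ge 0 : f(\delta')\in\overline{\proj{A}{l}}\cup B_l\}$, so that $\delta^*=\inf T$; when $\delta^*\notin T$ you must exhibit some $\hat\delta>\delta^*$ with $f(\hat\delta)\in R$ and $f(\delta')\in\overline{S}\cup R$ for all $\delta^*<\delta'<\hat\delta$. But for such $\delta'$ the negated $\cpre$ condition tells you nothing about $C_l$ (the disjunct $s\notin\while(A,f,\delta')$ already holds, since $\delta'>\delta^*$), so you actually need $f(\delta')\in R$, i.e., $(\delta^*,\hat\delta)\subseteq T$. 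For non-polyhedral $A$ this can fail --- e.g.\ $T=\{\delta^*+1/n:n\ge 1\}$ --- and in that situation the lemma itself is false. The paper closes the analogous case in its $[\supseteq]$ argument by using that the target set is a polyhedron and $f$ is differentiable, which forces an entire right-interval $(\delta^*,\delta^*+\epsilon)$ to lie in the preimage; you must do the same here with the polyhedron $\overline{\proj{A}{l}}\cup B_l$. This is the only non-routine step in the whole proof, so it should not be left as an unspecified ``density'' appeal.
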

\begin{proof}{}
In the following, let $I_l = \proj{\Invs}{l}$.

[$\subseteq$] Let $s = \pair{l, u} \in \cpre(A)$ and let $f \in \adm(s)$.
By definition, $0 \in \sp(f, l)$ and hence $s \in \while(A,f,0)$.
In particular, this implies that $s \in A$ and $u \in \proj{A}{l}$.

Assume by contradiction that $s$ does not belong to the r.h.s. of~\eqref{eq:cpreS}.
Since $u \in \proj{A}{l}$, it must be 
$$u \in \rwamay_l\big( I_l \cap \big( \overline{\proj{A}{l}} \cup B_l \big),
               C_l \cup \overline{I_l} \big).$$
Then, by definition there exists $f^* \in \adm(s)$ and $\delta^* \geq 0$
such that:
\emph{(i)} $f^*(\delta^*) \in I_l \cap (\overline{\proj{A}{l}} \cup B_l)$, and
\emph{(ii)} for all $0 \leq \delta < \delta^*$ it holds
$f^*(\delta) \in I_l \cap \big( \overline{C_l} \cup \overline{\proj{A}{l}} \cup B_l \big)$. In particular, this implies that $\delta^*$ belongs to $\sp(f^*,l)$.
On the other hand, if we apply the definition of $\cpre(A)$ to the activity $f^*$,
we obtain that for all $\delta \in \sp(f^*, l)$
either $s \in \while(A,f^*,\delta)$ or
there exists $\delta' < \delta$ such that 
$s \in \while(A,f^*,\delta')$ and $\pair{l, f^*(\delta')} \in \pre{c}{m}(A)$.
This implies that either $f^*(\delta^*) \in \proj{A}{l} \cap \overline{B_l}$
or there exists $\delta' < \delta^*$ such that
$f^*(\delta') \in \proj{A}{l} \cap \overline{B_l} \cap C_l$, which is a contradiction.

[$\supseteq$] Let $l \in \Loc$ and $u \in \proj{A}{l} \setminus
\rwamay_l\big( I_l \cap (\overline{\proj{A}{l}} \cup B_l), 
               C_l \cup \overline{I_l} \big)$.
By complementing the definition of $\rwamay$, we obtain that
for all activities $f$ that start from $s = \pair{l,u}$ and for all times
$\delta\geq 0$, either 
$f(\delta) \in \overline{I_l} \cup (\proj{A}{l} \cap \overline{B_l})$
or there exists $\delta'<\delta$ such that 
$$
f(\delta') \in 
\Big( \overline{I_l} \cup \big( \proj{A}{l} \cap \overline{B_l}\big)\Big)
\cap (C_l \cup \overline{I_l}) 
= \overline{I_l} \cup \big(\proj{A}{l} \cap \overline{B_l} \cap C_l \big)
\eqdef E_l.
$$
First, assume that for all $\delta\geq 0$ it holds 
$f(\delta) \in \overline{I_l} \cup (\proj{A}{l} \cap \overline{B_l})$.
In this case, for all $\delta\in\sp(f,l)$,
the point $f(\delta)$ belongs to $\proj{A}{l} \cap \overline{B_l}$.
In other words, $s \in \while(A, f, \delta)$
and hence $s \in \cpre(A)$.

Otherwise, there exists $\delta'$ 
such that $f(\delta') \in E_l$. Let $\delta^*$ be the
infimum of the $\delta'$ with the above property,
i.e., $\delta^* = \inf \{ \delta' \mid f(\delta') \in E_l \}$.
Notice that it holds 
$f(\delta) \in \overline{I_l} \cup \big(\proj{A}{l} \cap \overline{B_l}\big)$
for all $\delta \leq \delta^*$, which implies $s \in \while(A,f,\delta^*)$.
If there exists $\delta \leq \delta^*$ such that $f(\delta) \in \overline{I_l}$,
again we conclude that for all $\delta\in\sp(f,l)$
it holds $f(\delta) \in \proj{A}{l} \cap \overline{B_l}$
and hence $s \in \cpre(A)$.
In the rest of the proof, we can assume that
$f(\delta) \in I_l$ for all $\delta \leq \delta^*$,
and therefore $\delta^* \in \sp(f, l)$.

If $\delta^*$ is in fact the minimum of the above set, i.e.,
$f(\delta^*) \in E_l$, then according to the current assumptions
we have in particular $f(\delta^*) \in C_l = \proj{\pre{c}{m}(A)}{l}$.
Accordingly, $s \in \cpre(A)$.
Finally, we are left with the case in which $f(\delta^*) \not\in E_l$.
By definition, in any neighbourhood of $\delta^*$ there is a time
$\delta$ such that $f(\delta) \in E_l$.
Due to the fact that $E_l$ is a polyhedron
and that $f$ is differentiable, there exists $\delta' > \delta^*$
such that $f(\delta) \in E_l$ for all 
$\delta^*< \delta \leq \delta'$.
Therefore, $s \in \while(A,f,\delta')$,
and $\pair{l, f(\delta')} \in C_l =
\pre{c}{m}(A)$.
Again, we obtain that $s \in \cpre(A)$.
\qed
\end{proof}

\subsection{Computing the $\rwamay$ operator on LHAs}

In this section, we consider a fixed location $l$.
Given two polyhedra $G$ and $G'$, we define their \emph{boundary}
to be
$$ \bound(G, G') = (\clos{G} \cap G') \cup (G \cap \clos{G'}).$$
We can compute $\rwamay$ by the following fixpoint characterization.
\begin{theorem} \label{thm:may} For all locations $l$ and sets of
  valuations $U$, $V$, and $W$, let
\begin{equation} \label{eq:our-fixp}
\tau(U,V,W) = U \cup
\bigcup_{P\in\conv{\overline{V}}} \bigcup_{P'\in \conv{W}}
\Bigl( P \cap \epref{\bigl( \bound(P, P') \cap \epref{P'} \bigr)} \Bigr).
\end{equation}
We have $\rwamay_l(U, V) = \mu W \qdot \tau(U,V,W)$.
\end{theorem}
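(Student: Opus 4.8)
The plan is to first verify that $\tau(U,V,\cdot)$ is monotone — so that $\mu W \qdot \tau(U,V,W)$ is well defined and, by Knaster--Tarski/Kleene, equals $\bigcup_\alpha W_\alpha$ with $W_0=\emptyset$, $W_{\alpha+1}=\tau(U,V,W_\alpha)$ and $W_\lambda=\bigcup_{\alpha<\lambda}W_\alpha$ at limits — and then to prove the two inclusions separately. Monotonicity is not purely syntactic, because $\conv W$ depends on $W$ non-monotonically; I would recover it by pushing the union over $P'\in\conv W$ inside the outer $\epref{\cdot}$ (legitimate, since $\epref{\cdot}$ and topological closure commute with finite unions) and rewriting the resulting inner set so that it depends on $W$ only through the monotone quantities $W$, $\clos W$ and $\epref W$. (Throughout I assume, as is implicit, that $\Flow(l)\neq\emptyset$, so that $\adm(\pair{l,u})\neq\emptyset$ for all $u$; and I take $\conv{\cdot}$ to be a decomposition into \emph{relatively open} convex polyhedra, which is always available and simplifies the boundary bookkeeping. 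Finite convergence of the iteration — the separate termination claim — is not needed for correctness.)

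For $\mu W \qdot \tau(U,V,W)\subseteq\rwamay_l(U,V)$ I would show $W_\alpha\subseteq\rwamay_l(U,V)$ by transfinite induction. Base and limit cases are immediate; also $U\subseteq\rwamay_l(U,V)$ trivially (take $\delta=0$), and $\rwamay_l(U,V)\subseteq\overline V\cup U$. So fix $u\in P\cap\epref{(\bound(P,P')\cap\epref{P'})}$ with $P\in\conv{\overline V}$ and $P'\in\conv{W_\alpha}$. There is $w\in\bound(P,P')\cap\epref{P'}$ reached from $u$ by a straight-line flow segment; since $w\in\clos P$ and $P$ is relatively open and convex, this segment lies in $P\subseteq\overline V$, and $w\in P\cup P'\subseteq\overline V\cup W_\alpha\subseteq\overline V\cup U$. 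If $w\in P'$ then $w\in W_\alpha$, so by the inductive hypothesis some activity from $w$ reaches $U$ while staying in $\overline V\cup U$, and concatenating it after the first segment witnesses $u\in\rwamay_l(U,V)$. If instead $w\in P\cap\clos{P'}$, then, since $w\in\epref{P'}$, a second straight-line flow segment carries $w$ to a point $p\in P'$, and by convexity together with relative openness of $P'$ this segment (minus its starting point $w$) lies in $P'\subseteq W_\alpha$; as $p\in W_\alpha\subseteq\rwamay_l(U,V)$ we again conclude by concatenation and the inductive hypothesis.

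For the reverse inclusion, let $u\in\rwamay_l(U,V)$ be witnessed by $f\in\adm(\pair{l,u})$ and $\delta\geq 0$ with $f(\delta)\in U$ and $f(\delta')\in\overline V\cup U$ for all $\delta'<\delta$ (hence $f([0,\delta])\subseteq\overline V\cup U$). The first step invokes the standard polyhedral reachability lemma for LHA: continuous reachability inside a polyhedron is already witnessed by piecewise-linear flow trajectories having finitely many segments, each lying inside the closure of a single convex cell of a fixed decomposition of $\overline V\cup U$ into relatively open convex cells refining $\conv{\overline V}$ and $\conv U$; so I may replace $f$ by such a trajectory $\tilde f$, still ending in $U$ and staying in $\overline V\cup U$ throughout. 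Then I induct on the number $k$ of segments of $\tilde f$. If $u\in U$ then $u\in U=W_1$. Otherwise $u\in\overline V$, the first segment runs from $u$ inside a piece $P\in\conv{\overline V}$ to its endpoint $w\in\clos P$, and either $k=1$ and $w\in U=W_1$, or $k>1$ and dropping the first segment leaves a $(k-1)$-segment witness for $w\in\rwamay_l(U,V)$, so $w\in W_\beta$ for some $\beta\geq 1$ by induction. In all cases $w\in W_\beta$, hence $w\in Q$ for some $Q\in\conv{W_\beta}$, so $w\in\clos P\cap Q\subseteq\bound(P,Q)$ and $w\in Q\subseteq\epref{Q}$; since $u$ reaches $w$ by a flow segment inside $P$, we obtain $u\in P\cap\epref{(\bound(P,Q)\cap\epref{Q})}\subseteq\tau(U,V,W_\beta)=W_{\beta+1}$.

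The main obstacle, as this layout already suggests, is the reduction in the second inclusion of an arbitrary differentiable witnessing activity to a piecewise-linear one with finitely many segments, each confined to a single polyhedral cell: a continuous path may in principle oscillate infinitely between cells near a shared boundary, so this step genuinely relies on LHA-specific reachability theory (shortcutting chained linear flow segments using convexity of the cells and of $\clos{\Flow(l)}$, together with the fact that an averaged derivative lies in $\clos{\Flow(l)}$) rather than on soft topology; getting the bookkeeping right, in particular matching the ``reach $U$'' endpoint with the cell structure, is delicate. A further subtlety threading through both inclusions is keeping the trajectory segments that cross cell boundaries inside $\overline V\cup U$; this is exactly what the non-standard shape of $\bound(G,G')$ — retaining only the portion of $\clos G\cap\clos{G'}$ that actually lies in $G$ or in $G'$ — is engineered to provide, and the write-up should make that role explicit.
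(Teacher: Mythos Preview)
Your proposal is correct and follows essentially the same route as the paper: monotonicity of $\tau$, then the two inclusions, with the hard direction $\rwamay_l\subseteq\mu W$ resting on the Wong-Toi piecewise-linear reduction that the paper states as a separate lemma. The only differences are organizational: for $\mu W\subseteq\rwamay_l$ the paper shows directly that $\rwamay_l$ is a fixpoint of $\tau$ rather than inducting over the approximants $W_\alpha$, and for the reverse inclusion the paper runs a double induction (outer on the number of corner points, inner on the number of convex pieces of $\overline{V}\cup U$ crossed by a single straight segment) where you instead pre-refine the path so that each segment already sits in one cell---your single induction is cleaner but front-loads exactly the cell-confinement work you flag as the main obstacle.
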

Roughly speaking, $\tau(U,V,W)$ represents the set of points which either
belong to $U$ or do not belong to $V$ and can reach $W$ along a
straight line which does not cross $V$.  We can interpret the fixpoint
expression $\mu W \qdot \tau(U,V,W)$ as an incremental refinement of an
under-approximation to the desired result.  The process starts with
the initial approximation $W_0 = U$.  One can easily verify that $U
\subseteq \rwamay_l(U,V)$.  Additionally, notice that $\rwamay_l(U,V)
\subseteq U \cup \overline{V}$.  The equation refines the
under-approximation by identifying its \emph{entry regions}, i.e., the
boundaries between the area which \emph{may} belong to the result
(i.e., $\overline{V}$), and the area which already belongs to it
(i.e., $W$).  
That is, let $P\in\conv{\overline{V}}$ and $P' \in\conv{W}$,
let $b=\bound(P,P')$,
we call $R = b \cap \epref{P'}$ an entry region \emph{from $P$ to $P'$},
and also an entry region \emph{of $W$}.
%
The set $R$ contains the points of $b$ that may reach $P'$ 
by following the continuous evolution of the system. 
Hence, the system may move from $P$ to $P'$ through $R$.
Moreover, the set $R' = P \cap \epref{R}$ contains the points of $P$
that can move to $P'$ through $R$.  
%
%
Any point in $\overline{V}$ that may reach an entry region (without
reaching $V$ first) must be added to the under-approximation, since it
belongs to $\rwamay_l(U, V)$.

\paragraph{Proof of Theorem~\ref{thm:may}.}
First, we show that the $\tau$ operator is monotonic w.r.t.\ its third
argument, so that the least fixpoint $\mu W \qdot \tau(U,V,W)$ is well
defined.
\begin{lemma}
For all polyhedra $U$, $V$, and $W\subseteq W'$, it holds $\tau(U,V,W)
\subseteq \tau(U,V,W')$.
\end{lemma}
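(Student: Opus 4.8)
The plan is to prove the inclusion $\tau(U,V,W) \subseteq \tau(U,V,W')$ by analyzing the defining expression~\eqref{eq:our-fixp} term by term and exploiting the monotonicity and idempotence properties of $\epref{\cdot}$ stated just before Section~\ref{sec:compute}. Fix polyhedra $U$, $V$, and $W \subseteq W'$. The $U$ summand is identical on both sides, so it suffices to show that the double union over $\conv{\overline{V}}$ and $\conv{W}$ is contained in the corresponding double union over $\conv{\overline{V}}$ and $\conv{W'}$.

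First I would fix a convex polyhedron $P \in \conv{\overline{V}}$ and a convex polyhedron $P' \in \conv{W}$, and consider the contribution $P \cap \epref{(\bound(P,P') \cap \epref{P'})}$. Since $P' \subseteq W \subseteq W'$, there is a decomposition $\conv{W'}$ of $W'$ into convex pieces, and $P'$ is covered by the pieces of $\conv{W'}$ it meets; however, $P'$ need not itself be one of those pieces. The clean way around this is to observe that we do not need $P'$ to be a piece of $\conv{W'}$: it is enough to find \emph{some} convex $Q \in \conv{W'}$ for which the $P$-to-$Q$ contribution already covers the given point. Concretely, take a point $x$ in $P \cap \epref{(\bound(P,P') \cap \epref{P'})}$. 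Then there is a point $y \in \bound(P,P') \cap \epref{P'}$ reachable from $x$ by a straight flow line inside (the relevant sets), and from $y$ a flow line reaching some point $z \in P'$. Since $z \in W'$, it lies in some convex piece $Q \in \conv{W'}$. The key geometric observation is that the straight-line flow segment from $y$ to $z$, being a segment of an activity with derivative in $\Flow(l)$, will cross the boundary $\bound(P, Q)$ at some point $y'$ (or $y$ itself already lies in $\clos{Q}$), and the sub-segment from $y'$ onward witnesses $y' \in \bound(P,Q) \cap \epref{Q}$, while $x$ still reaches $y'$ along the same line; hence $x \in P \cap \epref{(\bound(P,Q) \cap \epref{Q})}$, which is one of the summands on the right-hand side.

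The main obstacle I anticipate is making the "boundary-crossing" argument fully rigorous: one must argue that a straight segment starting in $P$ (hence, near $y$, in the region $\overline{V}$) and ending in $Q \subseteq W'$ must meet $\bound(P,Q)$, and that the portion of the segment from that crossing point to $z$ stays suitable for witnessing membership in $\epref{Q}$. This is where convexity of $P$ and $Q$ and the definition $\bound(G,G') = (\clos{G}\cap G') \cup (G \cap \clos{G'})$ do the work: the segment's parameter set $\{t : \text{point}(t) \in \clos{Q}\}$ is closed, its complement within the segment is relatively open, and the infimum of the former (if the segment starts outside $\clos Q$) yields a point in $\clos Q$ that is a limit of points in $P$, hence in $\bound(P,Q)$; from that point the remainder of the segment lies in $\clos Q$ and reaches $z \in Q$, giving membership in $\epref Q$ after at most composing two collinear flow steps and using idempotence $\epref{(\epref{P'})} = \epref{P'}$ together with monotonicity. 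Once this crossing lemma is in hand, the term-by-term inclusion is immediate and taking unions over all $P \in \conv{\overline V}$ and all $P' \in \conv W$ completes the proof.
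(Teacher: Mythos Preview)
Your approach differs substantially from the paper's, and the route you take has a genuine gap.

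The paper's proof is a one-liner: it assumes ``for simplicity'' that $\conv{W}\subseteq\conv{W'}$, which is harmless because the convex decomposition is not canonical and one may always choose the decomposition of $W'$ to extend that of $W$ (for instance $\conv{W'}=\conv{W}\cup\conv{W'\setminus W}$, and in the fixpoint iteration the approximants are built precisely by accumulating convex pieces). Under that assumption the double union in $\tau(U,V,W)$ is literally a sub-union of the one in $\tau(U,V,W')$, so the inclusion is immediate.

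You instead try to work with two unrelated decompositions and match each $P'$-summand to some $Q$-summand via a boundary-crossing argument. The step that fails is the claim that the first entry point $y'$ of the segment from $y$ to $z$ into $\clos{Q}$ is ``a limit of points in $P$'', hence in $\bound(P,Q)$. Nothing forces that segment to stay in $\clos{P}$: all you know is $y\in\clos{P}$ (since $y\in\bound(P,P')$) and $z\in P'$; the intermediate points lie in $P'$ by convexity of $P'$, not in $P$. If the segment leaves $\clos{P}$ before it reaches $\clos{Q}$, your $y'$ belongs to neither $\clos{P}\cap Q$ nor $P\cap\clos{Q}$, so $y'\notin\bound(P,Q)$ and the argument collapses. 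The parenthetical ``or $y$ itself already lies in $\clos{Q}$'' does not save you either, since $y$ may lie in $\clos{P}\setminus P$ and in $\clos{Q}\setminus Q$ simultaneously, again missing $\bound(P,Q)$.

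Your plan can be repaired, but not along the line you sketch. Since the open segment $(y,z)$ lies in $P'\subseteq W'$ and $\conv{W'}$ is finite, some $Q'\in\conv{W'}$ contains points of that segment arbitrarily close to $y$, whence $y\in\clos{Q'}$ and $y\in\epref{Q'}$. Now split on the two disjuncts of $\bound(P,P')$: if $y\in P\cap\clos{P'}$ then $y\in P\cap\clos{Q'}\subseteq\bound(P,Q')$; if $y\in\clos{P}\cap P'$ then $y$ itself lies in $W'$, hence in some piece $Q'$, and $y\in\clos{P}\cap Q'\subseteq\bound(P,Q')$. Either way $x\in\epref{\{y\}}\subseteq\epref{(\bound(P,Q')\cap\epref{Q'})}$. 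This works, but it is considerably more effort than the paper's shortcut.
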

\begin{proof}
Assume for simplicity that $\conv{W} \subseteq \conv{W'}$.  Then, it
is sufficient to observe that, for all $P \in \conv{\overline{V}}$,
the expression $\bigcup_{P'\in \conv{W}} \bigl( P \cap \epref{\bigl(
  \bound(P, P') \cap \epref{P'} \bigr)} \bigr)$ is monotonic
w.r.t.\ $W$, since it is composed by monotonic operators.  \qed
\end{proof}
The following lemma allows us to switch from arbitrary activities
to piecewise straight lines, within Lemma~\ref{lem:complete}.
\begin{lemma}[\cite{WongToi}] \label{lem:straight} For all locations
  $l$, and valuations $u$ and $v$, if there is an activity $f \in
  \adm(\pair{l,u})$ and a time $\delta\geq 0$ such that $f(\delta) =
  v$ avoiding $V$, then there is a finite sequence of straightline
  activities leading from $u$ to $v$, each avoiding $V$.
\end{lemma}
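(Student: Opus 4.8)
\textbf{Proof plan for Lemma~\ref{lem:straight}.}

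The plan is to attribute the key geometric fact to Wong-Toi~\cite{WongToi} — as the lemma statement already does — and to sketch the argument that makes it self-contained. The statement claims that an arbitrary admissible activity reaching $v$ from $u$ while avoiding $V$ can be replaced by a finite concatenation of straight-line admissible activities, each of which also avoids $V$. Since we are in a fixed location $l$, admissibility of an activity $f$ just means $f(0)=u$ and $\dot f(\delta)\in\Flow(l)$ for all $\delta$, and $\Flow(l)$ is a fixed convex polyhedron; in particular, a straight-line segment from $p$ to $q$ traversed at constant velocity $c=(q-p)/\delta$ is admissible from $p$ exactly when $c\in\Flow(l)$, i.e. when $q\in p+\epref[l]{\{q\}}$ in the pre-flow sense used in Section~\ref{sec:compute}.

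The first step is to note that, because $V$ is a polyhedron (a finite union of convex polyhedra) and $f$ is continuous on the compact interval $[0,\delta]$, the set of times at which $f$ is ``close to the relevant boundary'' can be handled by a finite partition. Concretely, I would cover the image $f([0,\delta])$ by finitely many balls, each small enough that within the corresponding time subinterval the activity stays in a region where the relevant halfspaces defining $V$ do not change their truth value in an essential way, and then observe that on each such subinterval $f$ can be replaced by the straight segment joining its endpoints. The second step is the velocity-feasibility check: the average velocity of $f$ over a subinterval $[\delta_i,\delta_{i+1}]$ is $\frac{1}{\delta_{i+1}-\delta_i}\int_{\delta_i}^{\delta_{i+1}}\dot f(t)\,dt$, which is a convex combination (a "center of mass") of values $\dot f(t)\in\Flow(l)$, and since $\Flow(l)$ is convex this average lies in $\Flow(l)$; hence the replacing segment is itself an admissible activity. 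The third step is to check that the concatenated piecewise-linear curve still avoids $V$: this is where the smallness of the subdivision is used, so that each segment stays inside the neighbourhood of $f$ on that subinterval where $V$ is avoided — one takes the subdivision fine enough that the Hausdorff distance between each segment and the corresponding arc of $f$ is smaller than the (positive) distance from that arc to $V$, exploiting uniform continuity of $f$ and compactness of $[0,\delta]$.

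The main obstacle is making the "avoids $V$" part fully rigorous when $V$ is non-convex and when $f$ may graze the boundary of $V$ without entering it. A naive distance argument fails if the arc of $f$ touches $\partial V$, because then the distance to $V$ is zero. The standard fix — and the one I expect Wong-Toi's proof to use — is to exploit convexity more carefully: subdivide so that on each subinterval the image of $f$ lies in a single cell of the arrangement of the finitely many hyperplanes bounding the convex pieces of $V$ (or in the closure of such a cell that is disjoint from the open region $V$); then the straight segment between the endpoints, being inside the convex closure of that cell, cannot cross into $V$ either. This reduces the global non-convex problem to finitely many local convex ones, at the cost of a more delicate bookkeeping of which halfspaces are "active" on each subinterval. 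Once that finite arrangement-based subdivision is set up, the velocity-feasibility and endpoint-matching steps are routine, and chaining the finitely many segments yields the desired finite sequence of straight-line activities from $u$ to $v$, each avoiding $V$.
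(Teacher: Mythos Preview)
The paper does not actually prove Lemma~\ref{lem:straight}: it is stated with the citation~\cite{WongToi} and used as a black box inside the proof of Lemma~\ref{lem:complete}. So there is no ``paper's own proof'' to compare against; your plan to attribute the result to Wong-Toi is exactly what the paper does, and your sketch goes beyond the paper by outlining why the result is plausible.

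As for the sketch itself, the two main ingredients are sound. The mean-value/convexity argument showing that the chord between $f(\delta_i)$ and $f(\delta_{i+1})$ has slope in $\Flow(l)$ is correct and is indeed the standard reason one can pass to piecewise-linear activities in LHAs. The avoidance part is where care is needed, and you correctly flag the obstacle: a pure distance argument fails when the image of $f$ meets $\clos{V}\setminus V$. Your proposed remedy---refining the subdivision so that each arc of $f$ lies in (the closure of) a single cell of the hyperplane arrangement generated by the faces of the convex pieces of $V$---is the right idea; each such cell is convex, so the chord stays in its closure and hence outside $V$. One point to make explicit if you flesh this out: you need finiteness of the subdivision, which follows because $f$ is differentiable with bounded derivative (hence Lipschitz) on $[0,\delta]$ and the arrangement has finitely many cells, so $f$ can change cells only finitely many times---or, more robustly, because the preimages under $f$ of the finitely many bounding hyperplanes and their complements form a finite partition of $[0,\delta]$ into intervals, up to a further refinement. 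Without that, the ``finite sequence'' claim is not yet secured.
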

Theorem~\ref{thm:may} is an immediate consequence of the following
two lemmas.
\begin{lemma} 
  \label{lem:complete}
  For all locations $l$ and polyhedra $U$ and $V$, it holds
  $\rwamay_l(U, V) \subseteq \mu W \qdot \tau(U,V,W)$.
\end{lemma}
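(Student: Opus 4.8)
The goal is to show that every valuation $u \in \rwamay_l(U,V)$ is eventually added to the least-fixpoint approximation $\mu W \qdot \tau(U,V,W)$. Let me denote $W^* = \mu W \qdot \tau(U,V,W)$, which is the limit of the chain $W_0 = \emptyset$ (or $W_0 = U$, since $U \subseteq \tau(U,V,\emptyset)$), $W_{k+1} = \tau(U,V,W_k)$. The plan is to take $u \in \rwamay_l(U,V)$, witness it by an activity $f$ and a time $\delta$ with $f(\delta) \in U$ and $f(\delta') \in \overline{V} \cup U$ for all $\delta' < \delta$, and then argue by induction on the number of straight-line segments needed.

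First I would invoke Lemma~\ref{lem:straight}: instead of the arbitrary activity $f$, I may assume the witness trajectory from $u$ to a point $v \in U$ is a finite concatenation of straight-line activities, each avoiding $V$ (more precisely, avoiding $V \cap \overline{U}$, matching the definition of $\rwamay$). The induction is on the number $m$ of segments. The base case $m=0$ means $u \in U \subseteq W^*$ directly, since $U$ appears as a summand in $\tau$. For the inductive step, suppose the trajectory goes $u \to w \to \cdots \to v$ where $u \to w$ is the first straight segment and, by the inductive hypothesis applied to the shorter trajectory, $w \in W^*$. Here I need to be a little careful about whether $w$ itself lies in $U$ or merely in $\overline{V}$; in either case the inductive hypothesis (or the base case) gives $w \in W^*$.

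The crux is then the single-segment argument: given that $u$ reaches $w \in W^*$ along a straight line lying in $\overline{V} \cup U$, I must exhibit convex pieces $P \in \conv{\overline{V}}$ and $P' \in \conv{W^*}$ such that $u$ lands in the corresponding summand $P \cap \epref{(\bound(P,P') \cap \epref{P'})}$ of $\tau(U,V,W^*)$ — which, since $W^* $ is a fixpoint, equals $W^*$, completing the induction. The idea: the straight segment from $u$ to $w$ passes through finitely many convex cells of the arrangement determined by $\conv{\overline V}$ and $\conv{W^*}$; let $P'$ be a convex piece of $W^*$ that the segment enters (at latest $w$ itself is in some such piece), and let $t^*$ be the first time the segment meets $\clos{P'}$ starting from $u$. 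The point $q = f(t^*)$ lies in $\clos{P'}$ and, just before $t^*$, the segment lies in $\overline{V}$, hence $q$ lies in the closure of some convex piece $P \in \conv{\overline{V}}$. I would argue $q \in \bound(P,P')$, that $q \in \epref{P'}$ (the remaining part of the segment from $q$ into $P'$ is a straight ray with a flow-feasible direction, since the whole thing is a single straight-line activity), and that $u \in \epref{q\text{'s region}}$, i.e., $u$ reaches $q$ along a flow direction inside $P$. Collecting these, $u \in P \cap \epref{(\bound(P,P') \cap \epref{P'})} \subseteq \tau(U,V,W^*) = W^*$.

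The main obstacle I anticipate is the careful handling of the boundary operator $\bound(P,P')$ and of open/closed issues: the straight segment may graze $V$ or touch the boundary between $\overline{V}$ and $W^*$ without properly crossing, and the definition of $\rwamay$ deliberately uses a right-open condition ($\delta' < \delta$) so that entering $U \cap V$ at the final instant is permitted. I would need to choose $P$ and $P'$ so that the transition point $q$ genuinely sits in $\clos{P} \cap P'$ or $P \cap \clos{P'}$, using that polyhedra decompose into finitely many convex cells and that a straight line meets each cell in a (possibly degenerate) interval; a minimality/infimum argument on the crossing time, as already used in the proof of Lemma~\ref{lem:cpreS}, pins down $q$ correctly. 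A secondary subtlety is ensuring the inductive hypothesis is applied to a trajectory that still satisfies the $\rwamay$ avoidance condition from its new starting point $w$; since the tail of a valid avoiding trajectory is again a valid avoiding trajectory, this is routine but must be stated.
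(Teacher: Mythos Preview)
Your overall architecture matches the paper's: invoke Lemma~\ref{lem:straight} to reduce to piecewise-straight trajectories, then do an outer induction on the number of segments, and in the inductive step handle a single straight segment from $u$ into $W^*$. The gap is in that last step.

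You pick $P'\in\conv{W^*}$, take $q$ to be the first point on the segment lying in $\clos{P'}$, and then choose $P\in\conv{\overline V}$ as a piece whose closure contains $q$ (because ``just before $t^*$ the segment lies in $\overline V$''). You then conclude $u \in P \cap \epref{\bigl(\bound(P,P')\cap\epref{P'}\bigr)}$. The second membership is fine: $q$ witnesses it. The first is not: nothing in your argument places $u$ in that same convex piece $P$. You yourself note that the segment ``passes through finitely many convex cells'' of $\conv{\overline V}$, and that is exactly the problem --- $u$ may sit in a different cell than the one abutting $q$. A single application of $\tau$ only pulls points back through \emph{one} convex piece of $\overline V$; that is precisely what the outer intersection with $P$ enforces in~\eqref{eq:our-fixp}.

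The paper closes this gap with a second, inner induction on the number $n$ of convex pieces of $\conv{\overline V}$ that the straight segment from $u_0$ to the crossing point $v$ traverses. When $n=1$ your argument goes through verbatim. When $n>1$, one introduces the first intermediate boundary point $v_1$ (where the segment leaves the first piece), applies the inner inductive hypothesis to obtain $v_1\in W^*$, and then the $n=1$ case, applied to the sub-segment from $u_0$ to $v_1$, yields $u_0\in W^*$. Without this inner induction (or an equivalent argument that iterates $\tau$ once per convex piece crossed), the step ``$u\in P$'' is simply asserted, and the proof does not go through.
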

\begin{proof}
  Let $u \in \rwamay_l(U, V)$ and $W^* = \mu W \qdot \tau(U,V,W)$.  By
  definition, $u \in \overline{V} \cup U$. If $u$ belongs to $U$, then
  it belongs to $W^*$ by definition. If $u$ belongs to
  $\overline{V}\setminus U$, there must be an activity that starts in
  $u$ and reaches a point $u' \in U$ without visiting $V\setminus U$.
  By Lemma~\ref{lem:straight}, there is a finite sequence of
  straightline segments leading from $u$ to $u'$ and avoiding
  $V\setminus U$.  Let $u_0, u_1, \ldots, u_k$ be the corresponding
  sequence of intermediate corner points, where $u_0 = u$ and $u_k =
  u'$.
  We proceed by induction on $k$.  If $k=0$, it holds $u = u' \in U$,
  and the thesis is trivially true.  If $k>0$, we apply the inductive
  hypothesis to $u_1$, and we obtain that $u_1 \in W^*$.  Consider the
  straight path from $u_0 \in \overline{V}\setminus U$ to $u_1 \in
  W^*$.  This path crosses into $W^*$ in a given point $v$.  Formally,
  $v$ is the first point along the path which belongs to $\clos{W^*}$.
  Hence, there is at least one convex polyhedron $P'\in\conv{W^*}$
  such that $v\in\clos{P'}$. If there is more than one such
  polyhedron, pick the one that contains at least one point of the
  straight path from $v$ to $u_1$.  In this way, we have $v \in
  \epref{P'}$.

  Let $n$ be the number of convex polyhedra in
  $\conv{\overline{V}\cup U}$ that are crossed by the straight path
    from $u_0$ to $v$.  We start a new induction on $n$.  If $n=1$,
    the whole line segment from $u_0$ to $v$ is contained in a given
    $P \in \conv{\overline{V}\cup U}$. Hence, $v \in \bound(P, P')$,
    where $P'$ is a suitable element of $\conv{W^*}$.  Summarizing, we
    have $v \in \bound(P, P') \cap \epref{P'}$ and $u_0 \in \epref{\{
      v \}}$.  We conclude that $u_0 \in W^*$.  If $n>1$, we split the
    straight path from $u_0$ to $v$ into $n$ segments, defined by the
    intermediate points $v_1,\ldots,v_{n-1}$, and we apply the
    inductive hypothesis to $v_1$, obtaining that $v_1 \in W^*$.
    Finally, we use an argument analogous to the one for $n=1$ to
    conclude that $u_0\in W^*$.  \qed
\end{proof}

\begin{lemma} 
\label{lem:sound}
For all locations $l$ and polyhedra $U$ and $V$, it holds
$\rwamay_l(U, V) \supseteq \mu W \qdot \tau(U,V,W)$.
\end{lemma}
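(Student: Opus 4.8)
The plan is to prove the soundness direction: every valuation in the least fixpoint $W^* = \mu W \qdot \tau(U,V,W)$ indeed lies in $\rwamay_l(U,V)$, i.e.\ from such a valuation there is an admissible activity reaching $U$ while avoiding $V \setminus U$. Since a least fixpoint of a monotonic operator over a complete lattice (here the lattice of polyhedra ordered by inclusion, or more simply by transfinite iteration) is the limit of the iterates $W_0 = \emptyset$ (or $W_0 = U$, which is the first meaningful stage since $\tau(U,V,\cdot)$ always contains $U$), $W_{\alpha+1} = \tau(U,V,W_\alpha)$, $W_\lambda = \bigcup_{\alpha<\lambda} W_\alpha$ for limit $\lambda$, it suffices to show by transfinite induction that $W_\alpha \subseteq \rwamay_l(U,V)$ for every ordinal $\alpha$. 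The base case $W_0 = U \subseteq \rwamay_l(U,V)$ is immediate (take $\delta = 0$), and the limit case is trivial since $\rwamay_l(U,V)$ is a fixed set and unions of subsets remain subsets.

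The heart of the argument is the successor step. Assume $W_\alpha \subseteq \rwamay_l(U,V)$ and take a point $u \in W_{\alpha+1} = \tau(U,V,W_\alpha)$. If $u \in U$ we are done, so assume $u \in P \cap \epref{(\bound(P,P') \cap \epref{P'})}$ for some $P \in \conv{\overline V}$ and $P' \in \conv{W_\alpha}$. Unwinding the definition of pre-flow twice: there is a point $v \in \bound(P,P') \cap \epref{P'}$ and a straightline segment (direction in $\Flow(l)$, nonnegative duration) from $u$ to $v$ lying inside $P$, hence inside $\overline V$; and from $v$ there is a further straightline segment, again with slope in $\Flow(l)$, reaching some point $w \in P' \subseteq W_\alpha$. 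Concatenating these two segments yields an admissible activity from $u$ that reaches $w \in W_\alpha$. Here I would invoke the inductive hypothesis $w \in W_\alpha \subseteq \rwamay_l(U,V)$ to get a (straightline, by Lemma~\ref{lem:straight}, though any admissible activity suffices) continuation from $w$ that reaches $U$ while avoiding $V \setminus U$. The concatenation of all three pieces is an admissible activity from $u$ reaching $U$; what remains is to check it avoids $V \setminus U$ along the way.

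The avoidance check is where the subtlety of the definition of $\rwamay$ — the right-open interval $\forall\,0\le\delta'<\delta$ and the "$\overline V \cup U$" rather than "$\overline V$" — really matters, and I expect it to be the main obstacle. The first segment lies in $P \subseteq \overline V$, so it avoids $V$ entirely except possibly at its endpoint $v$; but $v \in \clos{P}$ may lie on the boundary of $\overline V$, i.e.\ in $\clos{V}$, and could even be in $V$. This is precisely why $\bound(P,P') = (\clos P \cap P') \cup (P \cap \clos{P'})$ is used and why the target of the inner pre-flow is required to be in $\epref{P'}$: the point $v$, whether it sits in $P$ or only in $\clos P$, is the handoff point into $W_\alpha$, and the definition of $\rwamay$ tolerates touching $V$ exactly at a point that is simultaneously being absorbed into the result. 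I would argue that either $v \in W_\alpha$ already (handled by the inductive hypothesis directly, possibly shortcutting the middle segment) or $v \in \bound(P,P')$ with the segment from $v$ into $P'$ immediately re-entering $\overline V \cup U$ through $W_\alpha \subseteq U \cup \overline V$; in all cases the "bad" set $V \setminus U$ is avoided on the relevant half-open interval because $W_\alpha \subseteq \rwamay_l(U,V) \subseteq U \cup \overline V$. Care is also needed at the corner points between the three concatenated pieces: a point shared by two segments must be checked once, and the half-open convention means the shared endpoint is "charged" to the later segment — I would set up the parametrization of the concatenated activity so that each corner is covered by the segment entering it, keeping the interval bookkeeping consistent with the $\delta' < \delta$ in the definition.

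A secondary point worth stating cleanly: the two monotonicity-and-idempotence properties of $\epref{\cdot}$ noted in Section~\ref{sec:compute} guarantee the double pre-flow is not "wasting" reachability — $\epref{(\epref{P'})} = \epref{P'}$ ensures that demanding $v \in \epref{P'}$ and then $u \in \epref{\{v\}}$ genuinely captures all of $\epref{P'}$ reachable from $P$, which is what makes the extracted straightline witnesses exist. With the three segments assembled and the avoidance verified on each half-open piece, we obtain an admissible activity witnessing $u \in \rwamay_l(U,V)$, completing the successor step and hence, by transfinite induction, the proof that $\mu W \qdot \tau(U,V,W) \subseteq \rwamay_l(U,V)$.
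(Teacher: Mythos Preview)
Your proposal is correct and reaches the same conclusion by a genuinely different route. The paper argues directly that $\rwamay_l(U,V)$ is a fixpoint of $\tau(U,V,\cdot)$: it verifies $\tau(U,V,\rwamay_l(U,V)) \subseteq \rwamay_l(U,V)$ (and the reverse inclusion as well) and then invokes the least-fixpoint property once. You instead unroll this into a transfinite induction on the iterates $W_\alpha$. Both approaches rest on the same core construction---from $u \in P \cap \epref{\bigl(\bound(P,P')\cap\epref{P'}\bigr)}$ one extracts a straightline to $v \in \bound(P,P')$, then a straightline into $P'$, appealing (implicitly, in both proofs) to convexity of $P$ and $P'$ to keep the half-open segments inside $\overline{V}\cup U$. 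Your case analysis at the handoff point $v$ matches the paper's: either $v\in P'$ (so $v$ lies in the inductive set already) or $v\in P$ (so $v\in\overline V$), and the open segment $(v,w]$ lies in $P'\subseteq U\cup\overline V$. The paper's approach is more economical---no ordinal bookkeeping, no limit stages, and the pre-fixpoint inclusion yields the result via Knaster--Tarski in one line---while yours makes the iterative structure explicit. Note that only the pre-fixpoint direction $\tau(U,V,\rwamay_l(U,V))\subseteq\rwamay_l(U,V)$ is actually needed for this lemma; the paper proves the reverse inclusion too, recording that $\rwamay_l(U,V)$ is an actual fixpoint, which your induction does not establish (nor need).
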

\begin{proof}
  It suffices to show that $\rwamay_l(U, V)$ is a fixpoint of $r$,
  i.e., $\rwamay_l(U, V) = \tau(U,V,\rwamay_l(U, V))$.  Let $u \in
  \tau(U,V,\rwamay_l(U, V))$, we shall prove that $u \in \rwamay_l(U,
  V)$.  If $u \in U$, the thesis is obvious.  Otherwise, there exist
  $P \in \conv{\overline{V}}$ and $P'\in\conv{\rwamay_l(U, V))}$ such
  that $u \in P \cap \epref{\bigl( \bound(P, P') \cap \epref{P'}
    \bigr)}$.  Hence, there is a straightline activity $f \in
  \adm(\pair{l, u})$ that reaches a point $v \in \bound(P, P') \cap
  \epref{P'}$, while staying in $P \subseteq \overline{V}$.  If $v \in
  P'$, we are done, as we have found an activity from $u$ to
  $\rwamay_l(U,V)$ which avoids $V\setminus U$.  Otherwise, $v$
  belongs to $\clos{P'} \cap \epref{P'}$ and, therefore, can reach
  some point $x\in P'$ through an arbitrarily small flow step along
  some activity. Since $P'\subseteq \rwamay_l(U, V))$, any other
  possible point $z$ between $v$ and $x$ along the activity belongs to
  $P'$ and, therefore, cannot belong to $V\setminus U$. Hence,
  $v\in \rwamay_l(U, V)$ and, consequently, the so does $u$.

  Finally, let $u \in \rwamay_l(U, V)$, we show that $u \in
  \tau(U,V,\rwamay_l(U, V))$. First, notice that $u \in U \cup
  \overline{V}$.  If $u\in U$, the thesis is obvious. Otherwise, there
  exist $P\in\conv{\overline{V}}$ and $P'\in\conv{\rwamay_l(U, V)}$
  such that $u \in P \cap P'$. Therefore, we also have $u \in
  \bound(P,P')$ and $u \in \epref{P'}$. By~\eqref{eq:our-fixp}, we
  obtain the thesis.  \qed
\end{proof}

\paragraph{Termination.}
The following theorem states the termination of the fixpoint procedure
defined in Theorem~\ref{thm:may}.
\begin{theorem}\label{thm:term}
The fixpoint procedure for $\rwamay$ defined in Theorem~\ref{thm:may}
terminates in a finite number of steps.
\end{theorem}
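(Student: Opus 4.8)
\noindent\textit{Proof plan.}
By monotonicity of $\tau$ in its third argument, $\mu W \qdot \tau(U,V,W)$ is the limit of the non-decreasing Kleene chain $W_0 = U$, $W_{n+1} = \tau(U,V,W_n)$; and by Lemma~\ref{lem:sound} each $W_n$, lying below the least fixpoint, is contained in $\rwamay_l(U,V)$. It therefore suffices to exhibit a finite index $N$ with $\rwamay_l(U,V) \subseteq W_N$: combined with $W_n \subseteq \rwamay_l(U,V)$ this forces $W_N = W_{N+1} = \cdots = \rwamay_l(U,V)$, so the procedure stabilizes after $N$ steps. I would take $N = |\conv{\overline V \cup U}|$, the number of convex pieces of $\overline V \cup U$.

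The geometric core is a bounded-length sharpening of Lemma~\ref{lem:straight}: if $u \in \rwamay_l(U,V)$ then $u$ can reach $U$ while avoiding $V \cap \overline U$ along a piecewise-straight activity using at most $N$ straight segments. To prove this, start from any piecewise-straight witnessing activity --- one exists by Lemma~\ref{lem:straight}, applied with $V \cap \overline U$ as the avoided set --- whose trace lies in $\overline V \cup U = \bigcup \conv{\overline V \cup U}$. The key remark is that concatenating straight flow segments in location $l$ produces a displacement lying in the cone generated by $\Flow(l)$; by convexity of $\Flow(l)$ this displacement is of the form $\delta \cdot c$ with $\delta \geq 0$ and $c \in \Flow(l)$, hence is realized by a single straight flow segment. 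Consequently, whenever the trace leaves a convex piece $P \in \conv{\overline V \cup U}$ and later re-enters it, the whole intervening detour can be replaced by the straight segment from the last point in $P$ before the detour to the first point in $P$ after it: this segment stays inside the convex set $P \subseteq \overline V \cup U$, so it avoids $V \cap \overline U$, and it is a legal flow segment by the previous remark. After finitely many such replacements every convex piece is entered at most once, so the trace splits into at most $N$ maximal subtraces, each contained in a single convex piece; collapsing each subtrace to the straight segment between its endpoints (again inside that piece, and legal) yields a witnessing activity with at most $N$ segments.

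Finally, feed this into the completeness argument. Let $u_0, u_1, \dots, u_k$, with $u_0 = u$, $u_k \in U$ and $k \leq N$, be the corners of such a short witnessing activity, each segment $[u_i, u_{i+1}]$ contained in a convex piece $P^{(i)} \in \conv{\overline V \cup U}$. By reverse induction on $i$ one shows $u_i \in W_{k-i}$. The base case $i = k$ is $u_k \in U = W_0$. For the step, given $u_{i+1} \in W_{k-i-1}$, pick $P' \in \conv{W_{k-i-1}}$ with $u_{i+1} \in P'$; then $u_{i+1} \in P^{(i)} \cap \clos{P'} \subseteq \bound(P^{(i)}, P')$ and $u_{i+1} \in P' \subseteq \epref{P'}$, while $u_i \in P^{(i)}$ flows to $u_{i+1}$ along the straight segment, so $u_i \in \epref{\{u_{i+1}\}} \subseteq \epref{(\bound(P^{(i)},P') \cap \epref{P'})}$; comparing with~\eqref{eq:our-fixp} gives $u_i \in P^{(i)} \cap \epref{(\bound(P^{(i)},P') \cap \epref{P'})} \subseteq \tau(U,V,W_{k-i-1}) = W_{k-i}$. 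Hence $u = u_0 \in W_k \subseteq W_N$, and since $u$ was arbitrary, $\rwamay_l(U,V) \subseteq W_N$, which completes the proof. The one delicate point --- and the one I expect to be the main obstacle --- is the bounded-length lemma of the second paragraph: since the pieces of $\conv{\overline V \cup U}$ may overlap and share boundaries, ``the piece currently containing the trace'' and hence the notions of ``detour'' and ``re-entry'' must be set up carefully (for example by first refining the trace at the arrangement of all hyperplanes occurring in $U$ and $V$, whose relatively open cells are convex and pairwise disjoint); once this bookkeeping is in place, the rest is routine.
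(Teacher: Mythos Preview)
Your approach is sound and genuinely different from the paper's. The paper works \emph{syntactically}: it defines ``entry regions'' between convex pieces of $\overline V$ and of the current approximation, introduces single ``refinement steps'' $G \rstep{P,R} G'$, and then proves through a chain of lemmas (notably a pruning lemma and Lemma~\ref{lemma:bound}) that every entry region is discovered by a refinement sequence in which each $P \in \conv{\overline V}$ occurs at most once; hence the fixpoint stabilizes after at most $|\conv{\overline V}|+1$ iterations. You instead work \emph{semantically}, bounding the length of a witnessing piecewise-straight trajectory for each individual point of $\rwamay_l(U,V)$ and then replaying the completeness proof (Lemma~\ref{lem:complete}) with that bound to land in a fixed Kleene iterate $W_N$. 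Your route is conceptually simpler and reuses more of the existing machinery; the paper's route yields a slightly sharper bound and avoids the geometric shortcutting argument altogether.

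There is one genuine gap in your induction step. You take $P^{(i)} \in \conv{\overline V \cup U}$ and then invoke~\eqref{eq:our-fixp}, but the outer union in~\eqref{eq:our-fixp} ranges over $P \in \conv{\overline V}$, not $\conv{\overline V \cup U}$. As written, the inclusion $P^{(i)} \cap \epref{(\bound(P^{(i)},P') \cap \epref{P'})} \subseteq \tau(U,V,W_{k-i-1})$ is unjustified. The fix is easy: choose the decomposition $\conv{\overline V \cup U}$ to be $\conv{\overline V} \cup \conv{U}$ (this is legitimate since $\overline V \cup U$ is their union), and in the induction split on which family $P^{(i)}$ comes from. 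If $P^{(i)} \in \conv{U}$ then $u_i \in U = W_0 \subseteq W_{k-i}$ immediately; if $P^{(i)} \in \conv{\overline V}$ then~\eqref{eq:our-fixp} applies as you wrote. With this correction your reverse induction goes through.

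On the bounded-length lemma itself: your convexity argument that any concatenation of flow segments is itself realized by a single straight flow segment (via the convex combination of derivatives in $\Flow(l)$) is correct and is exactly the right tool. The remaining bookkeeping you flag --- making ``leaves $P$'' and ``re-enters $P$'' well-defined when pieces overlap --- is real but not deep; your proposed refinement to the arrangement of defining hyperplanes works, or more simply you can fix an arbitrary assignment of each corner $u_j$ to one containing piece and shortcut whenever two non-consecutive corners share the same assigned piece, which terminates because each shortcut strictly decreases the number of corners.
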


%
In order to prove Theorem~\ref{thm:term}, we shall need some
additional definitions and notation.
Given two polyhedra $E$ and $G$ and two convex polyhedra $P
\in\conv{E}$ and $P'\in\conv{G}$, if the entry region $R$ from $P$ to
$P'$ is not empty, we write $G \rstep{P, R} G'$, where $\conv{G'} =
\conv{G} \cup \{P \cap \epref{R}\}$, to denote a {\it refinement
  step}.
%
The following lemma can easily be proved exploiting idempotence
and monotonicity of $\epref{}$.
\begin{lemma} \label{lem:step2} Assume $G \rstep{P, R} G'$. For all
  entry regions $R'$ of $G'$ that are not entry regions of $G$ it
  holds $R' \subseteq \epref{R}$.
\end{lemma}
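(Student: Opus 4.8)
}
The plan is to observe that a refinement step changes the target set only by adjoining a single convex polyhedron, and then to chain a short sequence of set inclusions justified by the monotonicity and idempotence of $\epref{}$ noted right after the definition of the pre-flow. Recall that $G \rstep{P, R} G'$ means $\conv{G'} = \conv{G} \cup \{Q\}$, where $Q = P \cap \epref{R}$, while the source decomposition $\conv{\overline{V}}$ is left untouched. By definition, an entry region of $G'$ has the form $R' = \bound(\hat P, \hat{P'}) \cap \epref{\hat{P'}}$ for some source $\hat P \in \conv{\overline{V}}$ and target $\hat{P'} \in \conv{G'}$. If the target $\hat{P'}$ already lies in $\conv{G}$, then the pair $(\hat P, \hat{P'})$ is also a valid source--target pair for $G$, so $R'$ is by construction an entry region of $G$. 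Hence any entry region $R'$ of $G'$ that is \emph{not} an entry region of $G$ must be taken with respect to the newly added polyhedron, i.e.\ $\hat{P'} = Q$.

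It then remains to bound such an $R'$. First I would note that, directly from the definition of an entry region, $R' = \bound(\hat P, Q) \cap \epref{Q} \subseteq \epref{Q}$. Next I would compute $\epref{Q} = \epref{(P \cap \epref{R})}$: since $P \cap \epref{R} \subseteq \epref{R}$, monotonicity of $\epref{}$ gives $\epref{(P \cap \epref{R})} \subseteq \epref{(\epref{R})}$, and idempotence collapses the right-hand side to $\epref{R}$, so that $\epref{Q} \subseteq \epref{R}$. Combining the two inclusions yields $R' \subseteq \epref{Q} \subseteq \epref{R}$, which is exactly the thesis.

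The argument involves no real obstacle; the only point requiring a moment's care is the bookkeeping in the first paragraph, namely that a genuinely new entry region must have the freshly created polyhedron $Q$ as its target, new entry regions being unable to arise on the unchanged source side since the refinement step modifies only the target decomposition. Once that is settled, the result is precisely the two-step inclusion above, which is the anticipated use of the monotonicity and idempotence properties of $\epref{}$.
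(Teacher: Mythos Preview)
Your proposal is correct and follows essentially the same argument as the paper: identify that a new entry region of $G'$ must have the freshly added polyhedron $Q = P \cap \epref{R}$ as its target, then chain $R' \subseteq \epref{Q} \subseteq \epref{(\epref{R})} = \epref{R}$ via monotonicity and idempotence of $\epref{}$. If anything, your bookkeeping paragraph justifying why the target must be $Q$ is more explicit than the paper, which simply asserts the form of $R'$ directly.
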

\begin{proof}
By definition of entry region,
$R'=\bound{(P',P\cap\epref{R})}\cap\epref{(P\cap\epref{R})}$,
with $P'\in E$ and $P\cap\epref{R}\in G$.
Hence, we can write $R'\subseteq\epref{(P\cap\epref{R})}$. Moreover, from
($P\cap\epref{R})\subseteq\epref{R}$ and by monotonicity and
idempotence properties of $\epref{}$ it follows that
$\epref{(P\cap\epref{R})}\subseteq\epref{R}$. Hence the thesis
$R'\subseteq\epref{(P\cap\epref{R})}\subseteq\epref{R}$. \qed
\end{proof}
Intuitively, the fixpoint procedure to compute $\rwamay$ applies, at
each iteration $k\geq 1$, all the refinement steps of the form $G
\rstep{P, R} G'$, with $E = \overline{V}$ and $G =
\riter{k-1}{U}{V}{U}$ (where $\riter{0}{U}{V}{U} = U$ and
$\riter{i+1}{U}{V}{U} = \rfix{U}{V}{\riter{i}{U}{V}{U}}$) for every
entry region $R$ of the current under-approximation $G$, following a
breadth-first policy. The following lemma make the relationship
between (sequences of) refinement steps and the $\tau(\cdot)$ operator
precise.
\begin{lemma}\label{lem:fixref} If  $\pi = G_0 \rstep{P_1, R_1} G_1 \rstep{P_2, R_2} \ldots
  \rstep{P_m, R_m} G_m$ is a sequence of refinement steps with $R$
  entry region of $G_l$, then $R$ is an entry region of
  $\riter{m}{G_0}{\overline{E}}{G_0}$.
\end{lemma}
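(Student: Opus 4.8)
Throughout, write $G^{(i)}$ for the $i$-th fixpoint iterate $\riter{i}{G_0}{\overline{E}}{G_0}$, so that $G^{(0)} = G_0$ and $G^{(i)} = \tau(G_0,\overline{E},G^{(i-1)})$; since the iteration in the statement instantiates $V = \overline{E}$, we have $\conv{\overline{V}} = \conv{E}$, hence an entry region of $G^{(i)}$ is precisely a set of the form $\bound(P,P') \cap \epref{P'}$ with $P \in \conv{E}$ and $P' \in \conv{G^{(i)}}$ --- the very same kind of object that the refinement steps of $\pi$ manipulate. The plan is to reduce the lemma to the single inductive claim that $\conv{G_i} \subseteq \conv{G^{(i)}}$ for all $0 \leq i \leq m$. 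Granting this claim, the lemma follows at once: if $R$ is an entry region of $G_l$ (for any $l \leq m$, in particular $l = m$), then $R = \bound(P,P') \cap \epref{P'}$ with $P \in \conv{E}$ and $P' \in \conv{G_l} \subseteq \conv{G^{(l)}}$, so $R$ is an entry region of $G^{(l)}$; and since the fixpoint iterates form an increasing chain whose convex decompositions only grow ($G^{(i)}$ is obtained from $G^{(i-1)}$ by \emph{appending} the polyhedra $P \cap \epref{R}$ produced by all entry regions $R$ of $G^{(i-1)}$, with nothing discarded), we get $\conv{G^{(l)}} \subseteq \conv{G^{(m)}}$, whence $R$ is also an entry region of $G^{(m)} = \riter{m}{G_0}{\overline{E}}{G_0}$.

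\textbf{The induction.} I would prove the claim by induction on $i$. The base case $i = 0$ is the identity $G_0 = G^{(0)}$. For the inductive step, assume $\conv{G_{i-1}} \subseteq \conv{G^{(i-1)}}$ and consider the $i$-th refinement step $G_{i-1} \rstep{P_i, R_i} G_i$. By the definition of a refinement step, $R_i$ is a non-empty entry region of $G_{i-1}$, from some $P_i \in \conv{E}$ to some $P' \in \conv{G_{i-1}}$, and $\conv{G_i} = \conv{G_{i-1}} \cup \{P_i \cap \epref{R_i}\}$. The inductive hypothesis gives $P' \in \conv{G^{(i-1)}}$, so $R_i$ is also an entry region of $G^{(i-1)}$, and therefore --- directly from the definition of $\tau$ --- the polyhedron $P_i \cap \epref{R_i}$ occurs in $\conv{G^{(i)}} = \conv{\tau(G_0,\overline{E},G^{(i-1)})}$. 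Combining this with $\conv{G_{i-1}} \subseteq \conv{G^{(i-1)}} \subseteq \conv{G^{(i)}}$ (the last inclusion being the monotonicity of the decompositions along the iteration noted above) yields $\conv{G_i} \subseteq \conv{G^{(i)}}$, closing the induction.

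\textbf{Main obstacle.} The delicate point is the bookkeeping of the convex decompositions. One must treat $\conv{\cdot}$ \emph{incrementally}, so that passing from $G^{(i-1)}$ to $G^{(i)}$ only appends the new refinement polyhedra --- this is exactly what legitimizes the chain $\conv{G^{(0)}} \subseteq \conv{G^{(1)}} \subseteq \cdots$ --- and one must align the source side of entry regions, since the refinement steps range over $P_i \in \conv{E}$ while $\tau$ quantifies over $P \in \conv{\overline{V}}$, these coinciding only because of the instantiation $V = \overline{E}$. A secondary, harmless mismatch is that one refinement step adds a single polyhedron whereas one $\tau$-iterate adds all of them at once; it is absorbed by the monotonicity inclusion $\conv{G^{(i-1)}} \subseteq \conv{G^{(i)}}$. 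Note that Lemma~\ref{lem:step2} is not needed here: it records the containment $R' \subseteq \epref{R}$ rather than the preservation of the identity of entry regions, and it will instead be the workhorse for the termination argument.
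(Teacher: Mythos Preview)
Your proposal is correct and follows essentially the same approach as the paper: both arguments show by induction along the refinement sequence that each newly added polyhedron $P_i \cap \epref{R_i}$ also appears in the corresponding $\tau$-iterate, using that $R_i$ (resp.\ its target $P'$) is already present one level earlier. The only cosmetic difference is that you carry the global invariant $\conv{G_i} \subseteq \conv{G^{(i)}}$ by a forward induction on $i$, whereas the paper inducts on the smallest index $k$ at which the specific entry region $R$ first appears and applies the inductive hypothesis to $R_k$; the underlying mechanism is identical.
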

\begin{proof} Let $\overline{\pi} = G_0 \rstep{P_1, R_1} G_1
  \rstep{P_2, R_2} \ldots \rstep{P_k, R_k} G_k$ be the shortest prefix
  of $\pi$ such that $R$ is entry region of $G_k$. Clearly, $k\leq
  m$. We now proceed by induction on $k$. If $k=0$, then $R$ is entry
  region of $G_0 = \riter{0}{G_0}{\overline{E}}{G_0}$ and, by
  monotonicity of the operator $\tau$, the thesis holds.\\
  Assume $k > 0$. Since $R$ is entry region of $G_k$ but not in
  $G_{k-1}$ and $\conv{G_k} = \conv{G_{k-1}}\cup \{P_k \cap
  \epref{R_{k}}\}$, it must be $R = \bound(P,(P_k \cap \epref{R_{k}}))
  \cap \epref{(P_k \cap \epref{R_{k}})}$, with $P\in \conv{E}$ and
  $R_{k}$ entry region in $G_{k-1}$.
  By induction hypothesis, $R_k$ is entry region of
  $\riter{k-1}{G_0}{\overline{E}}{G_0}$. Since 
  $P_k\in \conv{E}$, by definition of $\tau$ we have $(P_k \cap
  \epref{R_{k}}) \in
  \rfix{G_0}{\overline{E}}{\riter{k-1}{G_0}{\overline{E}}{G_0}}
  =\riter{k}{G_0}{\overline{E}}{G_0}$. Therefore, $R$ is an entry
  region in $\riter{k}{G_0}{\overline{E}}{G_0}$. Again, by
  monotonicity of $\tau$, the thesis follows. \qed
\end{proof}
We shall now show that the number of different entry regions employed
by the fixpoint procedure for $\rwamay$ is finite and that the number
of its iterations is bounded, thus establishing termination of the
procedure itself.\\
We need first some properties of sequences of refinement steps.  Given
a sequence $\pi = G_0 \rstep{P_1, R_1} G_1 \rstep{P_2, R_2} \ldots
\rstep{P_k, R_k} G_k$, $\last(\pi)$ denotes $G_k$.  Moreover, given a
convex polyhedron $R$, let $\prune(\pi, R)$ be the sequence obtained
from $\pi$ by removing all edges which depend on $R$, i.e. such that
$R_i\subseteq \epref{R}$.
Formally, $\prune(\pi, R) = G_0 \rstep{P'_1, R'_1} G_1' \rstep{P'_2,
  R'_2} \ldots \rstep{P'_m,R'_m} G'_m$ is the largest subsequence of
$\pi$ such that $R'_i\neq R$, for all $1\leq i\leq m$.  Clearly, we
have $m \leq k$.
The following lemma states that $\prune(\pi, R)$ preserves all the
entry regions of $\last(\pi)$ that do not depend on $R$.

\begin{lemma} \label{lem:prun} Let $\pi = G_0 \rstep{P_1,
    R_1} G_1 \rstep{P_2, R_2} \ldots \rstep{P_k,
    R_k} G_k$ be a sequence of refinement steps and let $R$ be an
  entry region of $G_k$, such that $R \not\subseteq \epref{R_1}$.
  Then, there exists a subsequence $\pi'$ of $\prune(\pi, R_1)$ such
  that $R$ is an entry region of $\last(\pi')$.
\end{lemma}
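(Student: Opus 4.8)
The plan is to proceed by induction on the length $k$ of the sequence $\pi$, exploiting Lemma~\ref{lem:step2} to control how entry regions depend on their predecessors. The base case $k=0$ is vacuous or immediate: if $R$ is an entry region of $G_0$ then $\prune(\pi,R_1)=\pi$ itself has length $0$ and there is nothing to remove, so we take $\pi'=\pi$. For the inductive step, consider the last step $G_{k-1}\rstep{P_k,R_k}G_k$ and distinguish according to whether $R$ is already an entry region of $G_{k-1}$ or is genuinely new.

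If $R$ is an entry region of $G_{k-1}$, we simply apply the inductive hypothesis to the prefix $G_0\rstep{P_1,R_1}\cdots\rstep{P_{k-1},R_{k-1}}G_{k-1}$, which (since $R\not\subseteq\epref{R_1}$) yields a subsequence $\pi''$ of $\prune(G_0\cdots G_{k-1},R_1)$ having $R$ as entry region of $\last(\pi'')$; this $\pi''$ is also a subsequence of $\prune(\pi,R_1)$, so we are done. The interesting case is when $R$ is a new entry region of $G_k$: then by the analysis in the proof of Lemma~\ref{lem:step2} (or Lemma~\ref{lem:fixref}), $R$ has the form $\bound(P,P_k\cap\epref{R_k})\cap\epref{(P_k\cap\epref{R_k})}$ with $P\in\conv{E}$, and in particular $R\subseteq\epref{R_k}$. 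Now split on whether $R_k$ depends on $R_1$, i.e.\ whether $R_k\subseteq\epref{R_1}$. If it does, then by monotonicity and idempotence of $\epref{}$ we get $R\subseteq\epref{R_k}\subseteq\epref{(\epref{R_1})}=\epref{R_1}$, contradicting the hypothesis $R\not\subseteq\epref{R_1}$; so in fact $R_k\not\subseteq\epref{R_1}$. Hence the step $G_{k-1}\rstep{P_k,R_k}G_k$ survives pruning. It then suffices to show that $R_k$ can be obtained as an entry region after pruning the prefix: apply the inductive hypothesis to the prefix with the entry region $R_k$ (legitimate since $R_k\not\subseteq\epref{R_1}$ and $R_k$ is an entry region of $G_{k-1}$), obtaining a subsequence $\pi''$ of $\prune(G_0\cdots G_{k-1},R_1)$ with $R_k$ an entry region of $\last(\pi'')$. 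Appending the refinement step $\last(\pi'')\rstep{P_k,R_k}(\last(\pi'')$ extended by $P_k\cap\epref{R_k})$ — which is legal because $R_k$ is an entry region of $\last(\pi'')$ — produces the desired $\pi'$, in which $R$ (being $\bound(P,P_k\cap\epref{R_k})\cap\epref{(P_k\cap\epref{R_k})}$ with $P\in\conv{E}$ and $P_k\cap\epref{R_k}$ now present) is an entry region of $\last(\pi')$.

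The main obstacle is bookkeeping: one must be careful that ``subsequence of $\prune(\pi,R_1)$'' is preserved when we append steps, and that the convex polyhedron $P_k\cap\epref{R_k}$ produced along the pruned sequence is literally the same as the one in $G_k$ (this is where idempotence of $\epref{}$ matters, since $R_k$ itself is unchanged by pruning). A secondary subtlety is that $R$ might be realizable as an entry region of several intermediate $G_i$'s via different $R_j$'s; the cleanest fix is to let $\overline\pi$ be the shortest prefix of $\pi$ after which $R$ first becomes an entry region, exactly as in the proof of Lemma~\ref{lem:fixref}, and run the induction on the length of that prefix rather than on $k$ directly. With that convention the new step that creates $R$ is unambiguous, and the argument above goes through.
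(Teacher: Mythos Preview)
Your proposal is correct and essentially the same as the paper's proof: both induct on the length of $\pi$, use Lemma~\ref{lem:step2} to identify the step that introduces $R$, conclude via monotonicity and idempotence of $\epref{}$ that the corresponding $R_j$ satisfies $R_j \not\subseteq \epref{R_1}$, apply the inductive hypothesis to recover $R_j$ as an entry region after pruning, and then append the step $\rstep{P_j,R_j}$. The only organizational difference is that the paper directly takes the smallest index $j$ at which $R$ first appears (exactly your ``shortest prefix'' remark), whereas your main argument always peels off the last step and handles the ``$R$ already present in $G_{k-1}$'' case separately; both routes are valid.
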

\begin{proof}
We proceed by induction on $k$.  If $k=1$, we have that $R$ is an
entry region of $G_1$, with $R\subseteq \epref{R_1}$. By
Lemma~\ref{lem:step2} $R$ must be entry region of $G_0$.

If $k>1$, let $j$ be the smallest index such that $R$ is an entry
region in $G_j$. If $j=0$, we are done. Otherwise, by
Lemma~\ref{lem:step2} we have $R \subseteq \epref{R_j}$. Consequently,
$R_j \not\subseteq \epref{R_1}$ (otherwise, by monotonicity it would
hold $R\subseteq \epref{R_1}$).  Apply the inductive hypothesis to the
prefix $G_0 \rstep{P_1,R_1} G_1 \rstep{P_2,R_2}
\ldots \rstep{P_{j-1},R_{j-1}} G_{j-1}$ and to $R_j$. We
obtain that there exists a sequence $\pi'$ that starts from $G_0$,
does not use $R_1$, and ends in a polyhedron $G'$ such that $R_j$ is
an entry region of $G'$.  Hence, for the sequence $\pi'
\rstep{P_j,R_j} G''$ we have that $R$ is an entry region of
$G''$ and we obtain the thesis. \qed
\end{proof}



We are now ready to state the main property relating entry regions and
sequences of refinement steps.

\begin{lemma}\label{lemma:bound}
  Let $\pi = G_0 \rstep{P_1, R_1} G_1 \rstep{P_2,
    R_2} \ldots \rstep{P_n, R_n} G_n$ be a sequence of
  refinement steps and let $R$ be an entry region of $G_n$.  Then,
  there exists a subsequence $\pi'=G_0 \rstep{P'_1, R'_1}
  G'_1 \rstep{P'_2, R'_2} \ldots \rstep{P'_m, R'_m}
  G'_m$, such that:
$R$ is an entry region of $G'_m$ and
$P'_i \neq P'_j$ for all $1\leq i < j\leq m$.
\end{lemma}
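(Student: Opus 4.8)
The statement asserts that from any sequence $\pi$ of refinement steps leading to a polyhedron $G_n$ in which $R$ is an entry region, we can extract a subsequence $\pi'$ that still produces $R$ as an entry region but in which no two steps use the same convex polyhedron $P'_i$ on the ``source'' side. The natural strategy is induction on the number of refinement steps $n$, using Lemma~\ref{lem:prun} as the workhorse for eliminating redundant steps, much in the spirit of a pumping argument: whenever two steps share the same source polyhedron, one of them must be removable.

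**Main steps.** First I would set up the induction on $n$; the base case $n=0$ is immediate since then $R$ is already an entry region of $G_0$ and we take $\pi'=\pi$. For the inductive step, I would look at the first step $G_0 \rstep{P_1,R_1} G_1$ and ask whether $R$ depends on $R_1$, i.e.\ whether $R \subseteq \epref{R_1}$. If it does \emph{not}, Lemma~\ref{lem:prun} directly gives a subsequence of $\prune(\pi, R_1)$ — hence a strictly shorter sequence of refinement steps not using $R_1$ at all — whose last polyhedron still has $R$ as an entry region; applying the inductive hypothesis to this shorter sequence yields the $P'_i$-distinctness. If $R$ \emph{does} depend on $R_1$, I would trace back (via Lemma~\ref{lem:step2}, iterated) to the earliest step $G_{j-1}\rstep{P_j,R_j}G_j$ after which $R$ becomes an entry region, obtaining $R\subseteq\epref{R_j}$; I then need to control how often the polyhedron $P_1$ reappears as a source among $P_1,\dots,P_j$. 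The key observation is that if some later step $G_{i-1}\rstep{P_1,R_i}G_i$ (with $i\le j$, $i>1$) reuses $P_1$ as its source, then by the nesting of $\epref{}$-images one of the two occurrences can be pruned without destroying $R$: concretely, I would argue that $\prune(\pi,R_1)$ or an analogous pruning removes the redundancy while Lemma~\ref{lem:prun}-style reasoning preserves $R$, after which the induction hypothesis applies to the strictly shorter pruned sequence.

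**The main obstacle.** The delicate point is the case where $R$ depends on $R_1$ and $P_1$ is genuinely needed — i.e.\ the step using $P_1$ cannot simply be dropped. Here I expect to need a finer invariant: rather than pruning, I would reorder and merge, showing that if two steps have source $P_1$ and targets $Q$ and $Q'$, the entry region produced by the second can be obtained from a single step whose target already incorporates the first, using idempotence of $\epref{}$ (as in Lemma~\ref{lem:step2} and Lemma~\ref{lem:fixref}). Making this merging argument airtight — in particular verifying that $\bound(\cdot,\cdot)$ behaves well under replacing a target polyhedron by a larger one containing it, and that the chain of dependencies back to $R$ is preserved — is where the real work lies. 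Once that is in place, a straightforward counting argument (each source polyhedron used at most once, and there are only $|\conv{\overline{V}}|$ of them) closes the induction and, combined with Lemma~\ref{lem:fixref}, will feed directly into the proof of Theorem~\ref{thm:term}.
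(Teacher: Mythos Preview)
Your plan handles the easy case correctly: when $R \not\subseteq \epref{R_1}$, Lemma~\ref{lem:prun} yields a strictly shorter sequence still discovering $R$, and the induction closes. The difficulty is entirely in the other case, and there your treatment has a real gap. Once $R \subseteq \epref{R_1}$, you locate the minimal $j$ with $R$ an entry region of $G_j$ and then propose to ``control how often $P_1$ reappears among $P_1,\dots,P_j$''. But there is no reason to privilege $P_1$ at this point --- the relevant source is $P_j$, the one that actually produces $R$. Even replacing $P_1$ by $P_j$ you have no inductive handle: you cannot apply the hypothesis to the prefix ending at $G_{j-1}$ \emph{for the same $R$}, since $R$ is not an entry region there. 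Your fallback ``merging'' idea is explicitly left as the part ``where the real work lies'', and the hoped-for monotonicity of $\bound(\cdot,\cdot)$ in its target argument is not available (topological boundaries are not monotone).

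The paper sidesteps all of this by changing what the inductive hypothesis is applied to. It first passes to the shortest prefix of length $k$ making $R$ an entry region of $G_k$; since $R$ is not an entry region of $G_{k-1}$, one has $R \subseteq P_k \cap \epref{R_k}$. The induction is then invoked not on $R$ but on $R_k$, which \emph{is} an entry region of $G_{k-1}$: this yields a distinct-$P$ sequence $\pi'$ of length $h$ discovering $R_k$. Appending the single step $(P_k,R_k)$ produces a sequence discovering $R$, with at most one collision $P_k = P'_j$ left to repair. That collision is resolved by a clean dichotomy: if $R_k \not\subseteq \epref{R'_j}$, Lemma~\ref{lem:prun} applied to the suffix from $G'_{j-1}$ removes step $j$; if $R_k \subseteq \epref{R'_j}$, idempotence of $\epref{}$ gives $P_k \cap \epref{R_k} \subseteq P'_j \cap \epref{R'_j}$, so $R$ was already an entry region of $G'_j$ and the prefix up to $j$ suffices. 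The idea you are missing is precisely this switch of target from $R$ to $R_k$; it is what lets the induction close without any merging or reordering.
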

\begin{proof}
Let $\overline{\pi} = G_0 \rstep{P_1, R_1} G_1
\rstep{P_2, R_2} \ldots \rstep{P_k, R_k} G_k$ be the
shortest prefix of $\pi$ such that $R$ is an entry region of $G_k$. We
proceed by induction on $k$. If $k=0$ or $k=1$, the thesis immediately
follows.  If $k > 1$,
then $R_k$ is an entry region of $G_{k-1}$ and $R$ is an entry region
in $G_k$.  Since $k$ is the first index for which $R$ is an entry
region in $G_k$, we also have $R\subseteq P_k\cap\epref{R_k}$.  We can
now apply the inductive hypothesis on
$G_0\rstep{P_1,R_1}G_1\rstep{P_2,R_2} \ldots
\rstep{P_{k-1},R_{k-1}}G_{k-1}$ to obtain the subsequence
$\pi'=G_0\rstep{P'_1,R'_1}G'_1 \rstep{P'_2,R'_2}
\ldots \rstep{P'_h,R'_h}G'_h$, where $P'_i\neq
P'_j$, for all
$1\leq i < j\leq h$, and $R_k$ is still an entry region of
$G'_h$.\\ 
Hence, $\pi^* = \pi'\rstep{P_k,R_k}G'_{h+1}$ is a sequence of
refinement steps, and $R\subseteq P_k\cap\epref{R_k}$ implies that $R$
is an entry region of $G'_{h+1}$.  Assume $P'_j=P_k$ for some $1\leq
j\leq h$.  Considering the subsequence $\hat{\pi} =
G'_{j-1}\rstep{P'_j,R'_j}G'_{j}\rstep{P'_{j+1},
  R'_{j+1}}\ldots\rstep{P_h,R_h}G'_h$, two cases may occur:

\begin{enumerate}
\item if $R_k\not\subseteq \epref{R'_j}$, then substituting
  $\prune(\hat{\pi},R_j')$ for $\hat{\pi}$ in $\pi^*$ we obtain, by
  Lemma~\ref{lem:prun}, the desired sequence of refinement steps;
\item if $R_k\subseteq\epref{R'_j}$, then the subsequence
  $G_0\rstep{*}G'_{j}$ of $\pi'$ is the desired sequence. Indeed, by
  idempotence of $\epref{}$, $R_k \subseteq \epref{R'_j}$ implies
  $\epref{R_k}\subseteq\epref{R'_j}$. Since $P'_j=P_k$, we also have
  that $P_k \cap \epref{R_k}\subseteq P'_j \cap
  \epref{R'_j}$. Therefore, $R\subseteq P_k\cap\epref{R_k} \subseteq
  P'_j \cap \epref{R'_j}$. Hence, $R$ is an entry region of
  $G'_j$. \qed
\end{enumerate}
\end{proof}

\noindent An immediate consequence of the previous lemma is that for
any entry region $R$ there is a sequence $\pi$ of refinement steps
discovering $R$ (i.e. with $R$ entry region of $\last(\pi)$) whose
length is bounded by $\vert \conv{E}\vert$. \\
We can now establish termination of the fixpont procedure to compute
$\rwamay$.

\paragraph{Proof of Theorem \ref{thm:term}.}
Notice that $\conv{\overline{V}}$ and $\conv{U}$ are finite sets of
convex polyhedra, therefore so is the number of initial entry regions
of $\conv{U}$. The fixpoint procedure of Theorem~\ref{thm:may} applies
the refinement steps in a breadth-first manner starting from the
initial entry regions. Therefore, in every iteration each entry region
discovered so far is employed in a refinement step. As a consequence
of Lemma~\ref{lemma:bound}, taking $E = \overline{V}$ and $G_0=U$, for
every entry region there is a sequence of refinement steps discovering
it, whose length is bounded by $\vert \conv{\overline{V}}
\vert$. Therefore, by Lemma~\ref{lem:fixref}, after at most $\vert
\conv{\overline{V}} \vert$ iterations of the procedure all the entry
regions have been discovered, and the fixpoint is reached at the next
iteration. \qed

\subsection{Previous Algorithms} \label{sec:previous}

In the literature, the standard reference for safety control of linear
hybrid systems is~\cite{WongToi}. 
The model and the abstract algorithm are essentially similar to ours,
except that, differently from our semantics, the states from which
a discrete transition is taken are subject to the safety constraint.
As to the computation of $\cpre$, they introduce an operator
$\mathit{flow\_avoid}$, which
corresponds to our $\rwamay$ operator. They propose to compute
$\rwamay_l(U,V)$ using the following fixpoint formula:
\begin{equation} \label{eq:wongtoi}
\bigcup_{U' \in \conv{U}} \bigcap_{V' \in \conv{V}} 
\Bigl( \mu W \qdot U' \cup \bigcup_{P\in \conv{\overline{V'}}} 
\bigl( \clos{P} \cap \overline{V'} \cap \epref{( W \cap P)} \bigr) \Bigr) 
\end{equation}
A simple example, however, shows that~\eqref{eq:wongtoi} is different
from (in particular, larger than) $\rwamay_l(U,V)$ when $V$ is
non-convex.  Consider the example in Figure~\ref{fig:wongtoi}, where
$U$ is the gray box on top and $V$ is the union of the two white
boxes.  Formula~\eqref{eq:wongtoi} treats the two convex parts of $V$
separately.  As a consequence, the result is the area covered by
stripes.  However, the correct results should not include the area
within the thick border (in red-colored stripes), because any point in
that region cannot prevent hitting one of the two convex parts of $V$.

\begin{figure}
\centering
\subfigure[Wong-Toi]{\label{fig:wongtoi}
\includegraphics[width=1.7in]{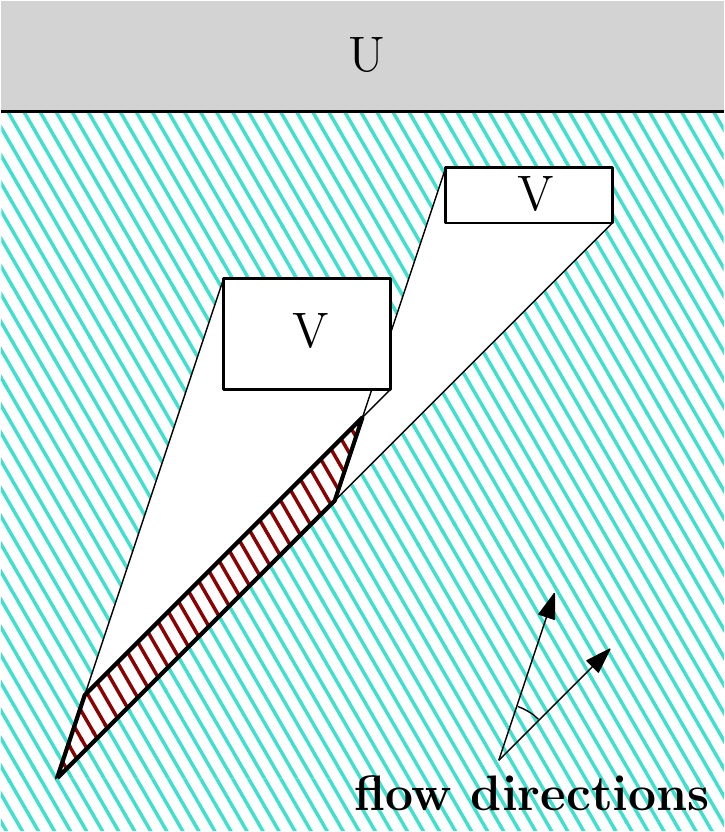}
}
\subfigure[HoneyTech]{\label{fig:honeytech}
\includegraphics[width=1.7in]{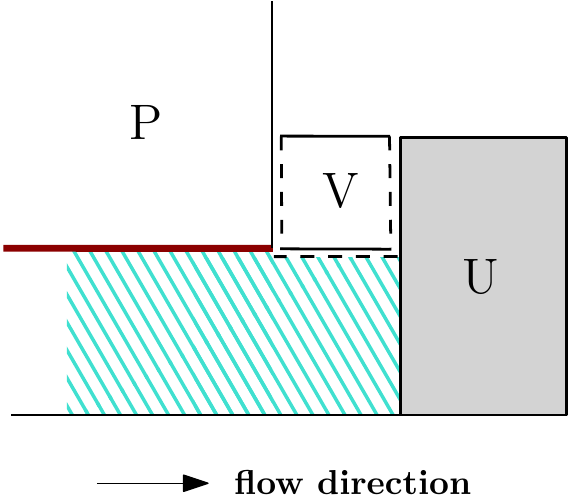}
}
\caption{Mistakes in previous fixpoint characterizations.}
\label{fig:mistakes}
\end{figure}

In~\cite{honeytech}, Deshpande et al. report about an implementation of
Wong-Toi's algorithm in the tool {\sc HoneyTech}, obtained as an extension
of HyTech.  The fixpoint formula that is meant to capture
$\rwamay_l(U,V)$ is the following:
\begin{equation} \label{eq:honey}
\mu W \qdot U \cup \bigcup_{P\in \conv{\overline{V}}}
\Bigl( P \cap \epref{\bigl( \clos{W} \cap \clos{P} \cap \overline{V} \cap \epref{W} \bigr)} \Bigr)
\end{equation}
Compared to~\eqref{eq:wongtoi}, formula~\eqref{eq:honey} correctly
treats the case of non-convex $V$. However, it suffers from another
issue, pertaining the distinction between topologically open and
closed polyhedra. Consider the example in Figure~\ref{fig:honeytech},
where $U$ is the gray box, $V$ is the white box, and dashed lines
represent topologically open sides of polyhedra.
The result of applying formula~\eqref{eq:honey} is the area covered
by near-vertical stripes. This area includes the thick solid line that
starts from a corner of $V$.
Indeed, if $W$ is the union of $U$ and the striped region and $P\in \conv{\overline{V}}$,
the thick line is exactly $\clos{W} \cap \clos{P} \cap \overline{V} \cap \epref{W}$.
However, this line does not belong to $\rwamay_l(U,V)$,
because all its points cannot avoid hitting $V$ before eventually reaching $U$.

\section{Experiments with PHAVer+} \label{sec:experiments}
In this section we show experiments about safety control.
We implemented the procedure showed in the previous section on the top of the open-source 
tool PHAVer~\cite{phaver05}.
The experiments were performed on an Intel Xeon (2.80GHz) PC.

\paragraph{Truck Navigation Control.}
The following example, the Truck Navigation Control (\emph{TNC}), is derived from the work~\cite{honeytech}.
Consider an autonomous toy truck, which is responsible for avoiding some 2 by 1 rectangular pits. 
The truck can take 90-degrees left or right turns: the possible directions are North-East (NE), 
North-West (NW), South-East (SE) and South-West (SW). One time unit must pass between two changes of  direction. The control goal consists in avoiding the pits.
Figure~\ref{fig:TNC} shows the hybrid automaton that models the system: there is one location for each direction, where the derivative of the position variables ($x$ and $y$) are set according to the corresponding direction. The variable $t$ represents a clock ($\dot{t}=1$) that enforces a one-time-unit wait between turns.

Figure~\ref{fig:iter} shows the three iterations needed to compute the fixpoint in Theorem~\ref{thm:main-loop}, in the case of two pits.
The safe set is the white area, while the gray region 
contains the points wherefrom it is not possible to avoid the pits.

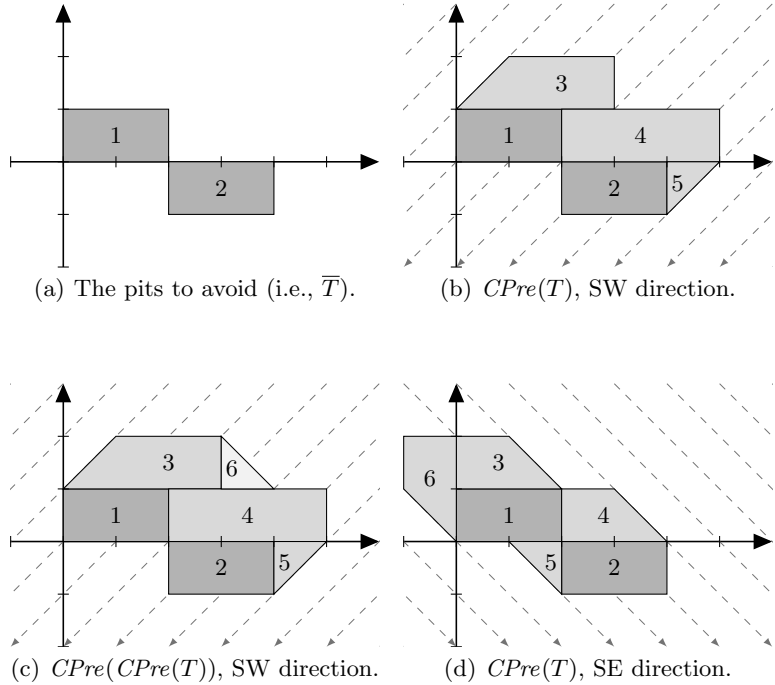
\begin{figure}[h]
\centering
\subfigure[The pits to avoid (i.e., $\overline{T}$).]{
\label{fig:ini}
\begin{tikzpicture}[cprefig]
\clip (-1,-2) rectangle (6,4);
\draw[-triangle 45,semithick] (0,-2) -- (0,3);
\draw[-triangle 45,semithick] (-1,0) -- (6,0);
\filldraw[my_rect1] (0,0) rectangle (2,1);
\filldraw[my_rect1] (2,0) rectangle (4,-1);
\draw (1,0.5)   node {\small 1};
\draw (3,-0.5)  node {\small 2};
\foreach \x in {-1,0,...,5} \draw (\x,-3pt) -- (\x,3pt);
\foreach \y in {-2,-1,...,2} \draw (-3pt,\y) -- (3pt,\y);
\end{tikzpicture}
}
%
%
\subfigure[$\cpre(T)$, SW direction.]{\label{fig:first}
\begin{tikzpicture}[cprefig]
\clip (-1,-2) rectangle (6,4);
\draw[-triangle 45,semithick] (0,-2) -- (0,3);
\draw[-triangle 45,semithick] (-1,0) -- (6,0);
\foreach \x in {-1,0,...,11}
   \draw[dashed,-latex,very thin, black!55] (\x,3) -- +(-5,-5);
\filldraw[my_rect1] (0,0) rectangle (2,1);
\filldraw[my_rect1] (2,0) rectangle (4,-1);
\filldraw[my_rect2] (0,1) -- (1,2) -- (3,2) -- (3,1) -- cycle;
\filldraw[my_rect2] (2,0) rectangle (5,1);
\filldraw[my_rect2] (4,0) -- (5,0) -- (4,-1) -- cycle;
\draw (1,0.5)   node {\small 1};
\draw (3,-0.5)  node {\small 2};
\draw (2,1.5)   node {\small 3};
\draw (3.5,0.5) node {\small 4};
\draw (4.2,-0.4) node {\small 5};
\foreach \x in {-1,0,...,5} \draw (\x,-3pt) -- (\x,3pt);
\foreach \y in {-2,-1,...,2} \draw (-3pt,\y) -- (3pt,\y);
\end{tikzpicture}
}
%
%
\subfigure[$\cpre(\cpre(T))$, SW direction.]{
\label{fig:second}
\begin{tikzpicture}[cprefig]
\clip (-1,-2) rectangle (6,4);
\draw[-triangle 45,semithick] (0,-2) -- (0,3);
\draw[-triangle 45,semithick] (-1,0) -- (6,0);
\foreach \x in {-1,0,...,11}
   \draw[dashed,-latex,very thin, black!55] (\x,3) -- +(-5,-5);
\filldraw[my_rect1] (0,0) rectangle (2,1);
\filldraw[my_rect1] (2,0) rectangle (4,-1);
\filldraw[my_rect2] (0,1) -- (1,2) -- (3,2) -- (3,1) -- cycle;
\filldraw[my_rect2] (2,0) rectangle (5,1);
\filldraw[my_rect2] (4,0) -- (5,0) -- (4,-1) -- cycle;
\filldraw[my_rect3] (3,1) -- (3,2) -- (4,1) -- cycle;
\draw (1,0.5)   node {\small 1};
\draw (3,-0.5)  node {\small 2};
\draw (2,1.5)   node {\small 3};
\draw (3.5,0.5) node {\small 4};
\draw (4.2,-0.4) node {\small 5};
\draw (3.2,1.4) node {\small 6};
\foreach \x in {-1,0,...,5} \draw (\x,-3pt) -- (\x,3pt);
\foreach \y in {-2,-1,...,2} \draw (-3pt,\y) -- (3pt,\y);
\end{tikzpicture}
}
%
%
\subfigure[$\cpre(T)$, SE direction.]{
\label{fig:firstSE}
\begin{tikzpicture}[cprefig]
\clip (-1,-2) rectangle (6,4);
\draw[-triangle 45,semithick] (0,-2) -- (0,3);
\draw[-triangle 45,semithick] (-1,0) -- (6,0);
\foreach \x in {-6,-5,...,6}
    \draw[dashed,-latex,very thin, black!55] (\x,3) -- +(+5,-5);
\filldraw[my_rect1] (0,0) rectangle (2,1); 
\filldraw[my_rect1] (2,0) rectangle (4,-1); 
\filldraw[my_rect2] (0,1) -- (0,2) -- (1,2) -- (2,1) -- cycle; 
\filldraw[my_rect2] (2,0) -- (2,1) -- (3,1) -- (4,0) -- cycle; 
\filldraw[my_rect2] (1,0) -- (2,0) -- (2,-1) -- cycle; 
\filldraw[my_rect2] (-1,2) -- (-1,1) -- (0,0) -- (0,2) -- cycle; 
\draw (1,0.5)   node {\small 1};
\draw (3,-0.5)  node {\small 2};
\draw (.8,1.5)   node {\small 3};
\draw (2.8,0.5) node {\small 4};
\draw (1.8,-0.4) node {\small 5};
\draw (-.5,1.2) node {\small 6};
\foreach \x in {-1,0,...,5} \draw (\x,-3pt) -- (\x,3pt);
\foreach \y in {-2,-1,...,2} \draw (-3pt,\y) -- (3pt,\y);
\end{tikzpicture}
}
%
%
\caption{Evolution of the fixpoint in the case of two pits.
All figures are cross-sections for $t=0$. Dashed arrows represent
flow direction.}
\label{fig:iter}
\end{figure}

The input safe region $T$ is the area outside 
the gray boxes 1 and 2 in Figure~\ref{fig:ini}.
The first iteration (Figure~\ref{fig:first}) computes $\cpre(T)$ and extends the unsafe set to those points (areas 3, 4, and 5) that will inevitably 
flow into the pits, before the system reaches $t=1$ and the truck can turn.
The second iteration (Figure~\ref{fig:second}) computes $\cpre(\cpre(T))$
and extends the unsafe set by adding the area 6:
those points may turn before reaching the pits, but after the turn they
end up in $\cpre(T)$ anyway (for instance, if turning left, 
they end up in area 4 of Figure~\ref{fig:firstSE}).
The third iteration reaches the fixpoint.

We tested our implementation on progressively larger versions of the truck model, by increasing the number of pits.
We also considered a version of TNC with non-deterministic continuous flow,
allowing some uncertainty on the exact direction taken by the vehicle.
Using an exponential scale, Figure~\ref{fig:perf} compares
the performance of our tool (solid line for deterministic model,
dashed line for non-deterministic) to the performance reported in~\cite{honeytech} (dotted line).
We were not able to replicate the experiments in~\cite{honeytech},
since {\sc HoneyTech} is not publicly available.

\begin{figure}[h!]
\centering
\subfigure[Hybrid Automaton for TNC.]{\label{fig:TNC}
\includegraphics[width=2.5in]{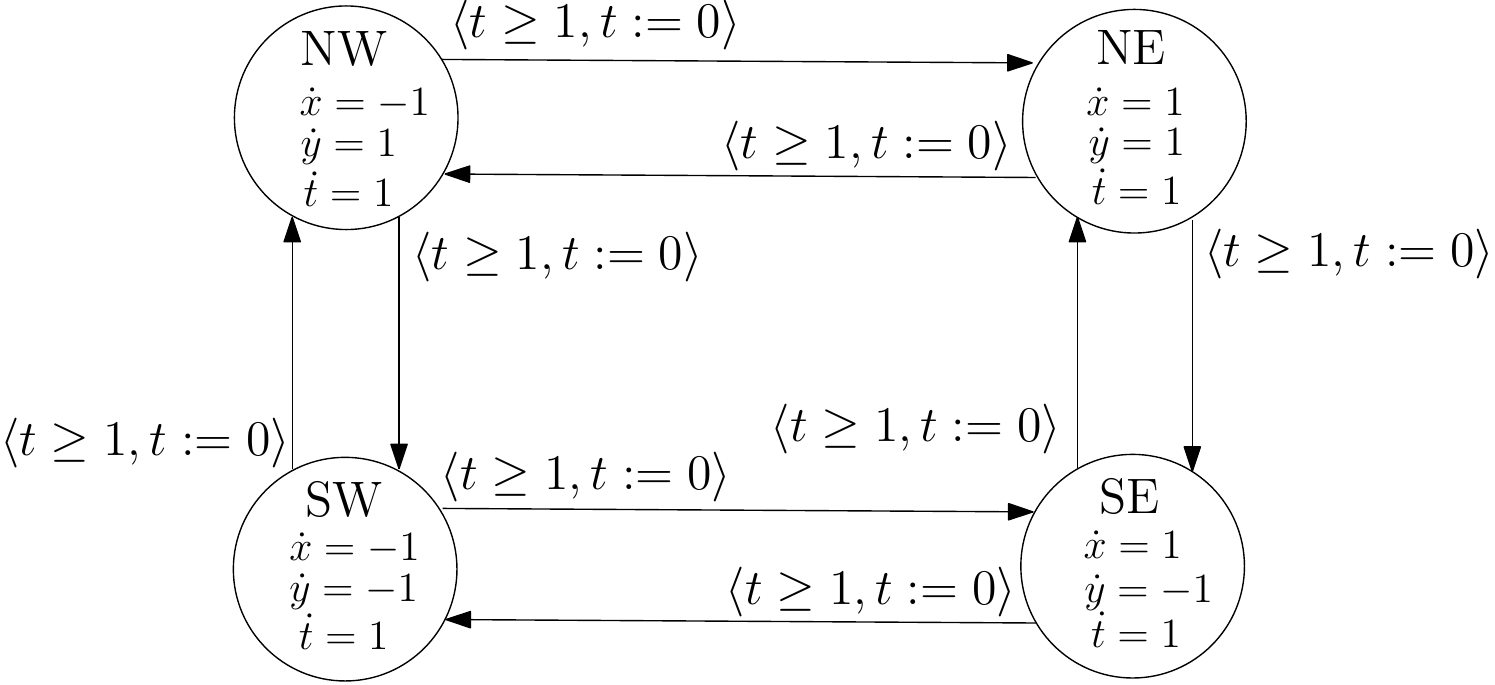}
}
\subfigure[Computation time as a function of the number of pits.]{\label{fig:perf}
\includegraphics[width=2.0in]{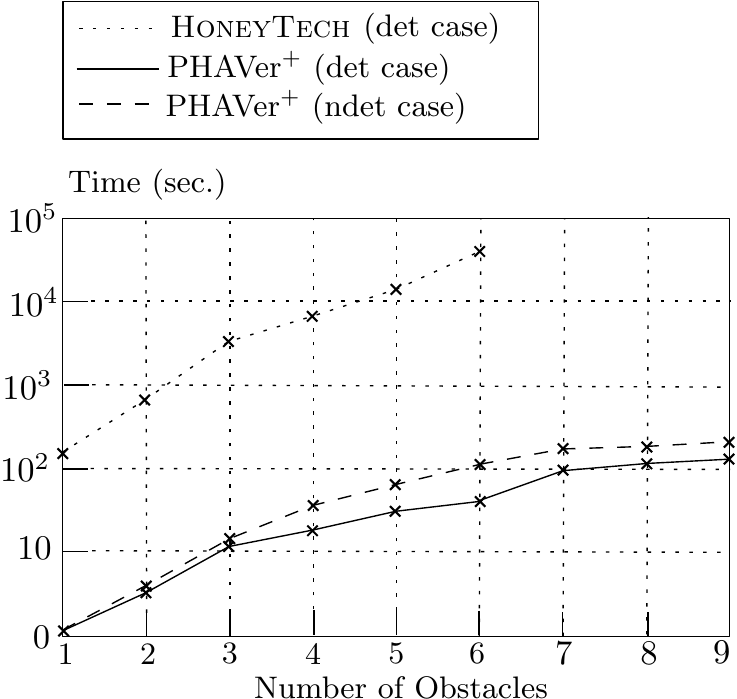}
}
\caption{Hybrid Automaton and performance for TNC.}
\label{fig:autandperf}
\end{figure}

Because of the different hardware used, only a qualitative comparison can be made:
going from 1 to 6 pits (as the case study in~\cite{honeytech}),
the run time of {\sc HoneyTech} shows an exponential behavior,
while our tool exhibits an approximately linear growth, as shown in
Figure~\ref{fig:perf}, where the performance of PHAVer+ is plotted 
up to 9 pits.


\paragraph{Water Tank Control.}
Consider a system where two tanks --- A and B --- are linked by a one-directional valve \textit{mid} (from A to B). There are two additional valves: the valve \textit{in} to fill A and the valve \textit{out} to drain B. The two tanks are open-air: the level of the water inside also depends on the potential rain and evaporation.
It is possible to change the state of one valve only after one second since the last valve operation.
Figure~\ref{fig:TanksDescr} is a schematic view of the system.

The corresponding hybrid automaton has eight locations, one for each combination of the state (open/closed) of the three valves, and three variables: $x$ and $y$ for the water level in the tanks, and $t$ as
the clock that enforces a one-time-unit wait between consecutive discrete transitions.
Since the tanks are in the same geographic location, rain and evaporation are assumed to have the same
rate in both tanks, thus leading to a proper LHA, that is not rectangular~\cite{hybridgames99}.

\begin{figure}[h!]
\centering
\subfigure[Schema of the system.]{
\label{fig:TanksDescr}
\includegraphics[width=1.8in]{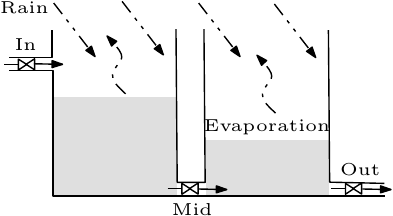}
}
\subfigure[Result for all valves closed and $t=0$.]{
\label{fig:TanksAllClose}
\includegraphics[width=1.3in]{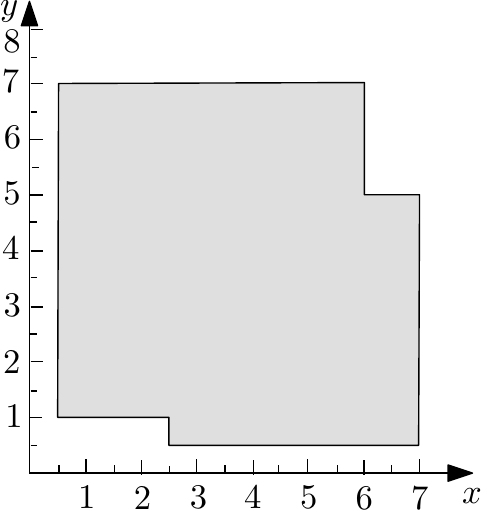}
}
\subfigure[Result with only \textit{mid} valve closed and $t=0$.]{
\label{fig:TanksMidClose}
\includegraphics[width=1.3in]{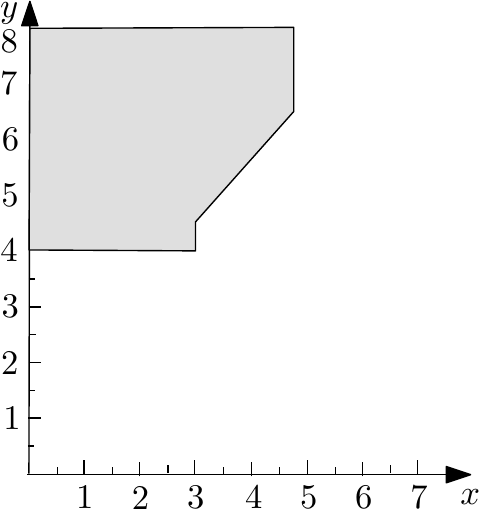}
}
\caption{Water Tank Control example.}
\label{fig:tanks}
\end{figure}

We set the \textit{in} and \textit{mid} flow rate to $1$, the \textit{out} flow rate to $3$, the maximum evaporation rate to $0.5$ and maximum rain rate to $1$, and solve the synthesis problem for the safety specification requiring the water levels to be between $0$ and $8$.
Figure \ref{fig:TanksAllClose} (resp., \ref{fig:TanksMidClose}) shows the fixpoint result in the case of all valves closed (resp., \textit{in} and \textit{out} open and \textit{mid} closed).
Due to the necessity of one second wait before taking a discrete action, in the case of Figure \ref{fig:TanksAllClose}, $x$ and $y$ must be between $0.5$ and $7$: otherwise, for example with a level greater than $7$ and maximum rain, after one second the level will exceed the limit. In a similar way, with a level less than $0.5$ and maximum evaporation, after one second the level would go below the lower bound.
The result is computed after 5 iterations in 11 seconds.

\section{Conclusions} \label{sec:conclusions}

We revisited the problem of automatically synthesizing a switching
controller for an LHA w.r.t.\ safety objectives.  The synthesis
procedure is based on the $\rwamay$ operator, for which we presented a
novel fixpoint characterization and formally proved its termination.

To the best of our knowledge, this represents the first sound and
complete procedure for the task in the literature.
%
We extended the tool PHAVer with our synthesis procedure
and performed a series of promising experiments.
An account of the challenges involved in the implementation
is presented in~\cite{gandalf11}.



\begin{thebibliography}{10}

\bibitem{asarin00}
E.~Asarin, O.~Bournez, T.~Dang, O.~Maler, and A.~Pnueli.
\newblock Effective synthesis of switching controllers for linear systems.
\newblock {\em Proceedings of the IEEE}, 88(7):1011--1025, 2000.

\bibitem{ddt02}
E.~Asarin, T.~Dang, and O.~Maler.
\newblock The d/dt tool for verification of hybrid systems.
\newblock In {\em Computer Aided Verification}, volume 2404 of {\em Lect.\
  Notes in Comp.\ Sci.}, pages 746--770. Springer, 2002.

\bibitem{Balluchi03}
A.~Balluchi, L.~Benvenuti, T.~Villa, H.~Wong-Toi, and
  A.~Sangiovanni-Vincentelli.
\newblock Controller synthesis for hybrid systems with a lower bound on event
  separation.
\newblock {\em Int. J. of Control}, 76(12):1171--1200, 2003.

\bibitem{gandalf11}
M.~Benerecetti, M.~Faella, and S.~Minopoli.
\newblock Towards efficient exact synthesis for linear hybrid systems.
\newblock In {\em {GandALF 11}: 2nd Int. Symp. on Games, Automata, Logics and
  Formal Verification}, volume~54 of {\em Elec. Proc. in Theor. Comp. Sci.},
  2011.

\bibitem{dAFHMS03}
L.~de~Alfaro, M.~Faella, T.A. Henzinger, R.~Majumdar, and M.~Stoelinga.
\newblock The element of surprise in timed games.
\newblock In {\em {CONCUR 03}: Concurrency Theory. 14th Int.\ Conf.}, volume
  2761 of {\em Lect.\ Notes in Comp.\ Sci.}, pages 144--158. Springer, 2003.

\bibitem{honeytech}
R.G. Deshpande, D.J. Musliner, J.E. Tierno, S.G. Pratt, and R.P. Goldman.
\newblock Modifying {\sc hytech} to automatically synthesize hybrid
  controllers.
\newblock In {\em Proc.\ of 40th {IEEE} Conf. on Decision and Control}, pages
  1223--1228. {IEEE} Computer Society Press, 2001.

\bibitem{phaver05}
G.~Frehse.
\newblock {PHAV}er: Algorithmic verification of hybrid systems past hy{T}ech.
\newblock In {\em Proc. of Hybrid Systems: Computation and Control (HSCC), 8th
  International Workshop}, volume 3414 of {\em Lect.\ Notes in Comp.\ Sci.},
  pages 258--273. Springer, 2005.

\bibitem{hybridLics96}
T.A. Henzinger.
\newblock The theory of hybrid automata.
\newblock In {\em Proc.\ 11th IEEE Symp.\ Logic in Comp. Sci.}, pages 278--292,
  1996.

\bibitem{hybridgames99}
T.A. Henzinger, B.~Horowitz, and R.~Majumdar.
\newblock Rectangular hybrid games.
\newblock In {\em {CONCUR 99}: Concurrency Theory. 10th Int.\ Conf.}, volume
  1664 of {\em Lect.\ Notes in Comp.\ Sci.}, pages 320--335. Springer, 1999.

\bibitem{whatsdecidable95}
T.A. Henzinger, P.W. Kopke, A.~Puri, and P.~Varaiya.
\newblock What's decidable about hybrid automata?
\newblock In {\em Proc. of the 27th annual ACM symposium on Theory of
  computing}, STOC '95, pages 373--382. ACM, 1995.

\bibitem{MalerGames}
O.~Maler, A.~Pnueli, and J.~Sifakis.
\newblock On the synthesis of discrete controllers for timed systems.
\newblock In {\em Proc. of 12th Annual Symp. on Theor. Asp. of Comp. Sci.},
  volume 900 of {\em Lect.\ Notes in Comp.\ Sci.} Springer, 1995.

\bibitem{RW87}
P.J. Ramadge and W.M. Wonham.
\newblock Supervisory control of a class of discrete-event processes.
\newblock {\em {SIAM} Journal of Control and Optimization}, 25:206--230, 1987.

\bibitem{tomlin00}
C.J. Tomlin, J.~Lygeros, and S.~Shankar~Sastry.
\newblock A game theoretic approach to controller design for hybrid systems.
\newblock {\em Proc. of the IEEE}, 88(7):949--970, 2000.

\bibitem{WongToi}
H.~Wong-Toi.
\newblock The synthesis of controllers for linear hybrid automata.
\newblock In {\em Proc. of the 36th IEEE Conf. on Decision and Control}, pages
  4607 -- 4612, San Diego, CA, 1997. {IEEE} Computer Society Press.

\end{thebibliography}

\end{document}